\newcommand{\tp}{\normalfont \text{T}}
\newcommand{\Val}{\operatorname{Val}}
\newtheorem{assumption}{Assumption}
\let\oldcorUpright\corUpright
\renewcommand{\corUpright}{\oldcorUpright\normalfont}
\def\qedsymbol{\ensuremath{\Box}}      
\def\qed{\ifhmode\unskip\nobreak\fi\quad\qedsymbol}     
\def\frqed{\ifhmode\nobreak\hbox to5pt{\hfil}\nobreak%
	\hskip 0pt plus1fill\nobreak\fi\quad\qedsymbol\renewcommand{\qed}{}} 
\def\QEDsymbol{\vrule width.6em height.5em depth.1em\relax}
\def\frQED{\ifhmode\nobreak\hbox to5pt{\hfil}\nobreak%
	\hskip 0pt plus1fill\nobreak\fi\quad\QEDsymbol\renewcommand{\qed}{}} 
\def\QED{\ifhmode\unskip\nobreak\fi\quad\QEDsymbol}     
\begin{document}
\title{\texttt{FlipDyn} in Graphs: Resource Takeover Games in Graphs}
%
%
\author{Sandeep Banik\inst{1}\orcidID{0000-0001-9949-7173} \and
Shaunak D. Bopardikar\inst{2}\orcidID{0000-0002-0813-7867} \and
Naira Hovakimyan \inst{1}\orcidID{0000-0003-3850-1073
}}
\authorrunning{S. Banik et al.}
%
\institute{University of Illinois Urbana-Champaign,  Urbana, IL 61801-3633, USA\\
\email{baniksan@illinois.edu, nhovakim@illinois.edu} \and
Michigan State University, East Lansing, MI, 48823-24 \\
\email{shaunak@egr.msu.edu}}
\maketitle              
\begin{abstract}
We present \texttt{FlipDyn-G}, a dynamic game model extending the \texttt{FlipDyn} framework to a graph-based setting, where each node represents a dynamical system. This model captures the interactions between a defender and an adversary who strategically take over nodes in a graph to minimize (resp. maximize) a finite horizon additive cost. At any time, the \texttt{FlipDyn} state is represented as the current node, and each player can transition the \texttt{FlipDyn} state to a depending based on the connectivity from the current node. Such transitions are driven by the node dynamics, state, and node-dependent costs. This model results in a hybrid dynamical system where the discrete state (\texttt{FlipDyn} state) governs the continuous state evolution and the corresponding state cost. Our objective is to compute the Nash equilibrium of this finite horizon zero-sum game on a graph. Our contributions are two-fold. First, we model and characterize the \texttt{FlipDyn-G} game for general dynamical systems, along with the corresponding Nash equilibrium (NE) takeover strategies. Second, for scalar linear discrete-time dynamical systems with quadratic costs, we derive the NE takeover strategies and saddle-point values independent of the continuous state of the system. Additionally, for a finite state birth-death Markov chain (represented as a graph) under scalar linear dynamical systems, we derive analytical expressions for the NE takeover strategies and saddle-point values. We illustrate our findings through numerical studies involving epidemic models and linear dynamical systems with adversarial interactions.

\keywords{Game Theory \and Graphs \and Dynamical Systems.}
\end{abstract}

\section{Introduction}
Cyber-Physical Systems (CPS) are integral to modern infrastructure, for a seamless integration of computational elements with physical processes to enable advanced functionalities in various domains. Some of the CPS encountered in real world include  smart grids, where computation coupled with sensor feedback ensures efficient energy distribution; autonomous vehicles, which rely on real-time data processing for navigation and safety; and industrial automation systems, which enhance productivity through precise control mechanisms and sensor feedback~\cite{lee2008cyber,shi2011survey}.

In the context of CPS, each node in a graph can be represented as a dynamical process, such as the generation and consumption of electricity in smart grids, the motion dynamics of autonomous vehicles, or the operational processes in industrial automation. These dynamical processes are interconnected through edges that represent the interactions and dependencies between them. For instance, in a smart grid, nodes may represent dynamic processes of energy generation and consumption at different substations, while edges denote the power flow between these substations~\cite{han2017mas}. Similarly, in autonomous vehicle networks, nodes could represent the dynamic driving processes of individual vehicles, with edges capturing the communication and coordination required for safe and efficient traffic flow~\cite{Yang2015stability,zheng2015stability,li2017dynamical}.

The use of graphs in modeling CPS is crucial for understanding the system's overall behavior and ensuring its robust operation. Graphs facilitate the visualization and analysis of how individual dynamic processes interconnect to form a larger, cohesive system. This interconnection highlights the importance of securing these nodes and their interactions to prevent disruptions that could compromise the entire system~\cite{bullo2009distributed,olfati2007consensus}.

A critical aspect of securing CPS involves understanding and mitigating the risks of stealthy takeovers, where an adversary covertly gains control of system components. The \texttt{FlipIT} game, introduced in~\cite{van2013flipit}, provides a framework for analyzing such stealthy takeovers. In \texttt{FlipIT}, both the attacker and defender can take control of a static resource, where the control changes can occur stealthily, without the immediate knowledge of the other party. This model captures the continuous and covert nature of security threats in CPS, emphasizing the need for persistent vigilance and strategic defense mechanisms.

The framework of \texttt{FlipIT} was extended to incorporate dynamical systems in  \texttt{FlipDyn}~\cite{FlipDyn_banik2022}, where both a defender and adversary aim to takeover a common resource modeled as a discrete-time dynamical system over a finite horizon. Building on the \texttt{FlipDyn} framework, this paper focuses on resource takeovers in graphs, where each node is represented as a resource with its own dynamics. The nodes are connected through edges that reflect the interactions between different CPS processes. The setup consists of two players, namely, a defender and an adversary who seek to repeatedly takeover the resources in the graph. This setup captures the strategic interactions between adversaries and defenders in a dynamic and interconnected environment, generalizing the \texttt{FlipDyn} setup to multiple \texttt{FlipDyn}  states.

Analyzing takeover games in this context involves understanding optimal strategies for both adversary and defender, considering various graph topologies and the specific characteristics of CPS. By leveraging game-theoretic models and the topology structure, this paper proposes robust defense mechanisms that enhance the resilience of CPS against takeover attacks. This approach is critical for ensuring the continued reliability and safety of essential infrastructures in the face of emerging cyber threats.

\subsection{Related Works}
The seminal work of stealthy takeover games, termed \texttt{FlipIT}~\cite{van2013flipit}, analyses a two player zero-sum game between a defender and an adversary attempting to takeover a static resource (computing device, virtual machine or a cloud service~\cite{bowers2012defending}). The work of \texttt{FlipIT} was generalized to the games of timing~\cite{johnson2015games}, where the actions of each player are dependent on the available exploitable vulnerability. Similarly, \texttt{FlipIT} has been extended to include time-based exponential discounting to the value of a protected resource~\cite{merlevede2019time}. In \texttt{FlipThem}~\cite{laszka2014flipthem}, the game of stealthy takeover was extended to multiple resources, which consists of an AND model, where all resources must be compromised for a takeover, and an OR model, where a player may compromise a single resource to takeover. The work in \texttt{FlipThem} was extended to i) a threshold-based version~\cite{leslie2015threshold}, which considered a finite number (threshold) of resources beyond which there exists no incentives to takeover, ii) multiple resource with constraints on the frequency of takeover actions~\cite{zhang2020defending}, and ii) heterogeneous resource costs and a learning-based method to determine player strategies~\cite{leslie2017multi}. Similar extensions include, Cheat-FlipIt model~\cite{yao2023cheat}, in which the opponent agent may feint to flip the resources first, and then control the resources after a finite delay. Such takeover strategies have also impacted the blockchain system~\cite{daian2020flash}, where arbitrage bots in decentralized exchanges engage in priority gas auctions to exploit against ordinary users. Beyond the domain of cybersecurity, the \texttt{FlipIT} model has been introduced in supervisory control and data acquisition (SCADA) to evaluate the impact of cyberattacks with insider assistance. The model of \texttt{FlipIT} has been extensively applied in system security~\cite{bowers2012defending}. These works primarily focused on resource takeovers within a static system, lacking consideration for the dynamic evolution of physical systems. In contrast, our work incorporates the dynamics of a physical system in the game of resource takeovers between an adversary and a defender, addressing the need for strategies that account for the continuous and evolving nature of CPS.


A finite-horizon zero-sum stochastic game has been studied to analyze probabilistic reachable sets for discrete-time stochastic hybrid systems~\cite{ding2013stochastic}, where both players simultaneously act on the system. Conversely, controllers have been synthesized~\cite{kontouras2014adversary} for intermittent switching between a defender and an adversary in discrete-time systems, with multi-dimensional control~\cite{kontouras2015covert} input and constraints. Such complete takeover of the system corresponds to covert misappropriation of a plant~\cite{smith2015covert}, where a feedback structure allows an attacker to take over control of the plant while remaining hidden from the supervisory system, extending such attacks in load frequency control (LFC) systems~\cite{mohan2020covert}. In contrast to previous research, our paper provides a feedback signal to infer who is in control and offers the ability to take control of the plant at any instant, balancing a trade-off between operational cost and performance.

The \texttt{FlipNet}\cite{saha2017flipnet} model extends the work of \texttt{FlipIT} to a graph, representing a networked system of multiple resources, where each player can take over the nodes of the graph. Network security in graphs is also viewed as advanced persistent threats (APT), which can be modeled as a zero-sum repeated game, where the states are represented as the set of compromised edges \cite{acquaviva2017optimal}. Similarly, APTs are modeled as multi-stage zero-sum network hardening games, where the adversary aims to determine the shortest path while the defender allocates resources in the graph to prevent the adversary from reaching the target node. Recently, dynamic information flow tracking has been proposed to detect APTs via a multistage game~\cite{moothedath2020game}. A similar model of an APT as a game played on an attack graph is explored in Cut-the-Rope~\cite{rass2019cut}, where the defender acts to cut the backdoor access of an adversary, demonstrating its efficacy on publicly documented attack graphs from the robotics domain~\cite{rass2023game}. The \texttt{FlipIT} model has also been used to study malware diffusion in epidemic models~\cite{miura2020modeling}. This work addresses \texttt{FlipIT} in a graph-based setting, where the defender and adversary repeatedly aim to take over the nodes of the graph. Unlike previous works, in this paper, the zero-sum game is played over a finite horizon with a discrete-time dynamical process on each node and time-varying costs.  

Our prior work which extends the \texttt{FlipIT} model to incorporate dynamical systems, termed \texttt{FlipDyn}~\cite{FlipDyn_banik2022}, presented a game of resource takeovers under a given control policy. The model of \texttt{FlipDyn} was extended to jointly solve the takeover and control policy~\cite{banik2023flipdyn}. In this paper, we extend the \texttt{FlipDyn} model to a finite horizon zero-sum game over a graph, where each node represents a dynamical system and the edges correspond to the interaction between these systems. The contributions of this work are two-fold: 

\begin{enumerate}
    \item \textbf{Takeover strategies over a graph with  discrete-time dynamical system on nodes}: We formulate a two-player zero-sum takeover game involving a defender and an adversary seeking to takeover the nodes of a graph, representing a discrete-time dynamical systems. The costs incurred by each player are contingent on the the current node of the graph. Assuming knowledge of the discrete-time dynamics, we establish the Nash equilibirum (NE) takeover strategies and saddle-point values. 
   \item \textbf{State-indepedent takeover strategies and saddle-point values for scalar/$1-$ dimensional systems}: For a linear discrete-time scalar dynamical system with quadratic takeover and state costs, we determine NE takeover policies independent of the continuous state of both players. Furthermore, for a  topology representing a finite state birth-death process, termed \emph{dual deter model}, we derive analytical expression of the NE takeover policies and saddle-point values.
\end{enumerate}

We illustrate our results on an epidemic model with no node dynamics and on a general graph with discrete-time dynamics on each node. 

This paper is structured as follows. Section~\ref{sec:Problem_Formulation} formally defines the \texttt{FlipDyn} problem in a graph setting with continuous state and node dependent costs. In Section~\ref{sec:FlipDyn-G_general}, we outline a solution methodology applicable to general discrete-time dynamical systems on nodes. Section~\ref{sec:FlipDyn_scalar_LQ} presents a solution for takeover policies for linear scalar discrete-time dynamical systems featuring quadratic costs, along with a topology dependent analytical solution and numerical examples. The paper concludes with a discussion on future directions in Section~\ref{sec:Conclusion}.

\section{Problem Formulation}\label{sec:Problem_Formulation}

Consider a directed multigraph $\mathcal{G}:=\{V,E, \phi \}$, where $V$ is the set of nodes with $|V| \in \mathbb{N}^{+}$, $E$ is the set of edges (paired nodes), and $\phi: E \to \{ \{\alpha, \beta\} | \alpha, \beta \in V^{2} \}$ is the  incidence function mapping every edge to an ordered pair of nodes, defining the connectivity of the graph. The term $e_{\alpha, \beta} \in E$ represents the edges connecting the node $\alpha \in V$ with the node $\beta \in V$, such that when $\alpha = \beta$, it represents a self-loop. We consider a single adversary, originating from any node of the graph $\mathcal{G}$. The adversary's goal is to reach nodes within the graph which induces maximum cost, while a defender's mission is to hinder the adversary's advances.

We consider the actions of the players and state evolution in discrete-time, with the variable $k$ denoting the current time step, which takes on values from the set $\mathcal{K} := \{ 1, 2, \ldots, L, L+1\}$. We represent the current node at time $k$ using a variable $\alpha_{k} \in V$, referred to as the \texttt{FlipDyn} state.  The adversary's action is denoted by the variable $\pi_{k}^{\text{a}} \in \epsilon(\alpha_{k})$, where the set $\epsilon(\alpha_{k})$ is defined as:
\begin{equation*}
    \epsilon(\alpha_{k}) := \bigcup j, j \in \{V | e_{\alpha_{k},j} \in E \}.
\end{equation*}
Here, $ \epsilon(\alpha_{k})$ represents the nodes the adversary can potentially target from the current node $\alpha_{k}$ at time $k$, with $j = \alpha_{k}$ indicating the choice to remain idle or stay in the same node. Similarly, the defender's action is denoted by $\pi_{k}^{\text{d}} \in \epsilon(\alpha_{k})$. Notice that the defender's action set is identical to that of the adversary's, to deter or prevent further escalation. The \texttt{FlipDyn} state update is based on both the action of the defender and adversary, given by:
\begin{equation}\label{eq:FlipDyn_update}
    \alpha_{k+1} = 
    \begin{cases}
        \pi^{d}_{k}, & \text{if } \pi_{k}^{\text{d}} = \pi_{k}^{\text{a}}, \\
        \pi_{k}^{\text{d}}, & \text{else if } \pi_{k}^{\text{d}} \in \{ \epsilon(\alpha_{k}) | \pi_{k}^{\text{a}} = \alpha_{k} \}, \\ 
        \pi_{k}^{\text{a}}, & \text{else if } \pi_{k}^{\text{a}} \in \{ \epsilon(\alpha_{k})| \pi_{k}^{\text{d}} = \alpha_{k}\}, \\
        \alpha_{k}, & \text{otherwise}.
    \end{cases}
\end{equation}
The \texttt{FlipDyn} update~\eqref{eq:FlipDyn_update} states that if the actions of both the defender and adversary are identical, then the \texttt{FlipDyn} state remains unchanged. However, if the defender opts to choose any node while the adversary remains idle, then the \texttt{FlipDyn} state transitions into the chosen node. Similarly, if the defender remains idle while the adversary chooses any node, then the \texttt{FlipDyn} transitions to the chosen node. The \texttt{FlipDyn} state transition can be compactly written as:
\begin{align}
    \begin{split}
        \label{eq:FD_u_compact}
        \alpha_{k+1} &= -\alpha_{k}{\mathbf{1}}_{\alpha_{k}}(\pi_{k}^{\text{a}}){\mathbf{1}}_{\alpha_{k}}(\pi_{k}^{\text{a}}) + \mathbf{1}_{\alpha_{k}}(\pi_{k}^{\text{a}})\pi_{k}^{\text{d}} + \mathbf{1}_{\alpha_{k}}(\pi_{k}^{\text{d}})\pi_{k}^{\text{a}}  + \bar{\mathbf{1}}_{\alpha_{k}}(\pi_{k}^{\text{a}})\bar{\mathbf{1}}_{\alpha_{k}}(\pi_{k}^{\text{a}})\pi_{k}^{d},
    \end{split}
\end{align}
where $\mathbf{1}_{\alpha_{k}}: 
\epsilon(\alpha_{k}) \to \{0,1\}$ is the indicator function, which maps to one if $\pi_{k}^{\text{d}} = \alpha_{k}$ or $\pi_{k}^{\text{a}} = \alpha_{k}$, and maps to zero, otherwise. The term  $\bar{\mathbf{1}}_{\alpha_{k}}$ is the one's complement of  ${\mathbf{1}}_{\alpha_{k}}$.
For illustrative purpose, consider the graph shown in Figure~\ref{fig:MSU_graph} with the \texttt{FlipDyn} state at time $k=1$ as  $\alpha_{\text{1}} = 1$. The \texttt{FlipDyn} state can transition to the node 2,~3 or remain in node 1 based on the update equation~\eqref{eq:FlipDyn_update}.
\begin{figure}[ht]
    \centering
    \includegraphics[width = 0.5\linewidth]{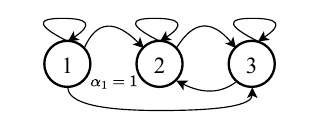}
    \caption{A directed multigraph consisting of 3 nodes. At time $k = 1$, the \texttt{FlipDyn} state is $\alpha_1 = 1$. The actions of both players are $\{ 1,2,3 \}$.}
    \label{fig:MSU_graph}
\end{figure}

In addition to the described graph environment, there is an underlying dynamical system whose continuous state at time $k$ is indicated by $x_{k} \in \mathcal{X} \subseteq  \mathbb{R}^{n}$, where $\mathcal{X}$ denotes the Euclidean state space. The state transition is dependent on the node $\alpha_{k+1}$ given by:
\begin{equation}
    \label{eq:CS_u}
    x_{k+1} = F_{k}^{\alpha_{k+1}}(x_{k}),
\end{equation}
where $F_{k}^{\alpha_{k+1}}: \mathcal{K} \to \mathcal{X}$ is the transition function for each $k \in \mathcal{K}$ and $\alpha_{k+1} \in V$ .

Our objective is to compute a strategy for both the players to transition the \texttt{FlipDyn} state to different nodes of the graph based on the dynamics~\eqref{eq:FD_u_compact},~\eqref{eq:CS_u}, takeover, and state costs. Given the initial state $x_{1}$ and node $\alpha_{1}$, we pose the node takeover problem as a zero-sum dynamic game governed by the \texttt{FlipDyn} update~\eqref{eq:FD_u_compact} and state dynamics~\eqref{eq:CS_u}, over a finite-time $L$, where the defender aims to minimize an additive cost given by: 
\begin{equation}\label{eq:obj_def_alpha}
	\begin{aligned}
		J(\alpha_1,  x_{1}, \{\pi^{\text{a}}_{\mathbf{L}}\}, \{\pi^{\text{d}}_{\mathbf{L}}\}) & = g_{L+1}^{\alpha_{L+1}}(x_{L+1}) + \sum_{t=1}^{L} g_t^{\alpha_t}(x_{t}) +   \bar{\mathbf{1}}_{\alpha_{t}}(\pi_{t}^{\text{d}})d_{t}^{\pi_{t}^{\text{d}}}(x_{t}) \\ & -\bar{\mathbf{1}}_{\alpha_{t}}( \pi_{t}^{\text{a}})a_{t}^{\pi_{t}^{\text{a}}}(x_{t}), 
	\end{aligned}
\end{equation}
where $g_{t}^{\alpha_t}(x_{t}): \mathcal{X} \rightarrow \mathbb{R}$ represents the cost for every \texttt{FlipDyn} state $\alpha_t \in V$, continuous state $x_{t}$ at time $t \in \mathcal{K}$, with $g_{L+1}^{\alpha_{L+1}}(x_{L+1}): \mathcal{X} \rightarrow \mathbb{R}$ representing the terminal cost for each $\alpha_{L+1} \in V$. The terms $d_{t}^{\pi_{t}^{\text{d}}}(x_{t}): \mathcal{X} \rightarrow \mathbb{R}$ and $a_{t}^{\pi_{t}^{\text{a}}}(x_{t}):  \mathcal{X} \rightarrow \mathbb{R}$ represent the instantaneous takeover costs of the defender and adversary, respectively, for each $t \in \mathcal{K}$ and action $\pi_{t}^{\text{d}},\pi_{t}^{\text{a}} \in \epsilon(\alpha_{t})$. 
The defender and adversary actions over the finite-horizon $L$ is given by the notations $\{\pi_{\mathbf{L}}^{\text{a}}\} := \{\pi_{1}^{\text{a}}, \dots, \pi_{L}^{\text{a}}\}$, and $\{\pi_{\mathbf{L}}^{\text{d}}\} := \{\pi_1^{\text{d}}, \dots, \pi_{L}^{\text{d}}\}$, respectively. In contrast, the adversary aims to maximize the cost function~\eqref{eq:obj_def_alpha} leading to a zero-sum dynamic game. This formulation characterizes the strategic interaction between the two players in the context of a node takeover problem in a graph environment, termed as \texttt{FlipDyn-G} game.

We seek to find Nash Equilibrium (NE) solutions of the game~\eqref{eq:obj_def_alpha}. To guarantee the existence of a pure or mixed NE takeover strategy, we expand the set of player policies to behavioral strategies --  probability distributions over the space of discrete actions at each discrete time~\cite{hespanha2017noncooperative}.
Specifically, let
\begin{equation}\label{eq:DF_bp}
	\mathbf{y}_{k}^{\alpha_{k}} := \bigcup_{j \in \epsilon(\alpha_{k})} y_{k,j}^{\alpha_{k}}, \sum_{j \in \epsilon(\alpha_{k})} y_{k,j} = 1, y_{k,j} \geq 0,  \text{ and }
\end{equation}
\begin{equation}\label{eq:AD_bp}
	\mathbf{z}_{k}^{\alpha_{k}} := \bigcup_{j \in \epsilon(\alpha_{k})} z_{k,j}^{\alpha_{k}}, \sum_{j \in \epsilon(\alpha_{k})} z_{k,j} = 1, z_{k,j} \geq 0
\end{equation}
be the behavioral strategies for the defender and adversary, respectively, at time instant $k$ for the \texttt{FlipDyn} state $\alpha_k$.
The takeover actions are
\[
\pi_{k}^{\text{d}} \sim \mathbf{y}_{k}^{\alpha_{k}}, \quad  \pi_{k}^{\text{a}} \sim \mathbf{z}_{k}^{\alpha_{k}},  
\]
for the defender and adversary at any time $k$ are sampled from the corresponding behavioral strategy. The behavioral strategies are $y_{k}^{\alpha_{k}}, z_{k}^{\alpha_{k}}  \in \Delta_{|\epsilon(\alpha_{k})|}$, where $\Delta_{|\epsilon(\alpha_{k})|}$ is the probability simplex in $|\epsilon(\alpha_{k})|$ dimensions. 
Over the finite horizon $L$, let $y_{\mathbf{L}} := \{\mathbf{y}_{1}^{\alpha_{1}}, \mathbf{y}_{2}^{\alpha_{2}}, \dots, \mathbf{y}_{L}^{\alpha_{L}}\} \in \Delta_{|\epsilon(\alpha_{1})|} \times \Delta_{|\epsilon(\alpha_{2})|} \times \dots \times \Delta_{|\epsilon(\alpha_{L})|}$ and $z_{\mathbf{L}} := \{\mathbf{z}_{1}^{\alpha_{1}}, \mathbf{z}_{2}^{\alpha_{2}}, \dots, \mathbf{z}_{L}^{\alpha_{L}}\} \in \Delta^{L}_{|\epsilon(\alpha_{1})|} \times \Delta^{L}_{|\epsilon(\alpha_{2})|} \times \dots \times \Delta^{L}_{|\epsilon(\alpha_{L})|}$ be the sequence of defender and adversary behavioral strategies. Thus, the expected outcome of the zero-sum game~\eqref{eq:obj_def_alpha} is given by:
\begin{equation}
    \label{eq:obj_def_E}
	J_{E}(x_1, \alpha_{1}, y_{\mathbf{L}}, z_{\mathbf{L}}) :=  \mathbb{E}[J( x_{1}, \alpha_{1}, \{\pi^{\text{a}}_{L}\}, \{\pi^{\text{d}}_{L}\})],
\end{equation}
where the expectation is computed with respect to the distributions $y_{\mathbf{L}}$ and $z_{\mathbf{L}}$. Specifically, we seek a saddle-point solution ($y_{\mathbf{L}}^{*}, z_{\mathbf{L}}^{*}$) in the space of behavioral strategies such that for any non-zero initial state $x_1 \in \mathcal{X}, \alpha_1 \in V$, we have:
\begin{equation*}
    \begin{aligned}
         J_{E}(x_1, \alpha_1, y_{\mathbf{L}}^{*}, z_{\mathbf{L}}) \leq J_E(x_1, \alpha_1, y_{\mathbf{L}}^{*}, z_{\mathbf{L}}^{*}) \leq J_{E}(x_1, \alpha_1, y_{\mathbf{L}}. z_{\mathbf{L}}^{*}).
    \end{aligned}
\end{equation*}
The \texttt{FlipDyn} game over a graph is completely defined by the expected cost~\eqref{eq:obj_def_E} and the space of player takeover strategies subject to the dynamics in~\eqref{eq:FD_u_compact} and \eqref{eq:CS_u}. In the next section, we derive the outcome of the \texttt{FlipDyn} game for each node in the graph for general systems.

\section{\texttt{FlipDyn-G} for General Problem}\label{sec:FlipDyn-G_general}
We will begin by defining the saddle-point value and deriving the NE takeover strategies of the \texttt{FlipDyn-G} game in any given node of the graph $\mathcal{G}$. 

\subsection{Saddle-point value of any node}
At time instant $k \in \mathcal{K}$, given a \texttt{FlipDyn} state $\alpha_{k}$, the saddle-point value consists of the instantaneous state cost and an additive cost-to-go based on the players takeover actions. The cost-to-go is determined via a cost-to-go matrix in each of the \texttt{FlipDyn} state $\alpha_{k}$, represented by $\Xi_{k+1}^{\alpha_{k}} \in \mathbb{R}^{|\epsilon(\alpha_{k})| \times |\epsilon(\alpha_{k})|}$. Let $V^{\alpha_{k}}_k(x, \Xi_{k+1}^{\alpha_{k}})$ be the saddle-point value at time instant $k$ with the continuous state $x$ and cost-to-go matrix, corresponding to the \texttt{FlipDyn} state of $\alpha_{k}$. Let us define a set of nodes connected to $\alpha_{k}$ as, $\{\alpha_{k}, j_{2}, j_{3}, \dots, j_{m(\alpha_{k})}\} \in \epsilon(\alpha_{k}),$ where $m(\alpha_{k}) = |\epsilon(\alpha_{k})|$. Such an ordered set of nodes will help us define the cost-to-go matrix. The entries of the cost-to-go matrix $\Xi^{\alpha_{k}}_{k+1}$ corresponding to each pair of takeover actions are given by:
\begin{equation}\label{eq:Cost_to_go_al0}
    \begin{aligned}
		& \begin{matrix} \hphantom{0000000} \text{$\alpha_{k}$} & \hphantom{00000000000} \text{$j_{2}$ } \hphantom{00000} & \dots & \hphantom{000000} \text{$j_{m(\alpha_{k})}$}\end{matrix} \\
		\begin{matrix} \text{$\alpha_{k}$} \\[3pt] \text{$j_{2}$} \\[3pt] \dots \\ \text{$j_{m(\alpha_{k})}$} \end{matrix} & \underbrace{\begin{bmatrix}
			v_{k+1}^{\alpha_{k}}(\alpha_{k},\alpha_{k}) &  \dots & \dots &  v_{k+1}^{j_{m(\alpha_{k})}}(\alpha_{k},j_{m(\alpha_{k})}) \\[3pt]
            v_{k+1}^{j_{2}}(j_{2},\alpha_{k}) &  v_{k+1}^{j_{2}}(j_{2},j_{2}) & \dots &  v_{k+1}^{\alpha_{k}}(j_{2},j_{m(\alpha_{k})}) \\ \dots & \dots & \dots & \dots \\
			v_{k+1}^{j_{m(\alpha_{k})}}(j_{m(\alpha_{k})},\alpha_{k}) &  v_{k+1}^{\alpha_{k}}(j_{m(\alpha_{k})},j_{2}) & \dots & v_{k+1}^{j_{m(\alpha_{k})}}(j_{m(\alpha_{k})},j_{m(\alpha_{k})})
		\end{bmatrix}}_{\Xi_{k+1}^{\alpha_{k}}}
	\end{aligned},
\end{equation}
where $v_{k+1}^{\alpha_{k+1}} (\pi_{k}^{\text{d}},\pi_{k}^{\text{a}})$ corresponds to the cost-to-go value of a \texttt{FlipDyn} state $\alpha_{k+1} \in V$, defined as:
\begin{equation*}
    \begin{split}
        v_{k+1}^{\alpha_{k+1}}(\pi_{k}^{\text{d}},\pi_{k}^{\text{a}}) &:= V_{k+1}^{\alpha_{k+1}}(F^{\alpha_{k+1}}_{k}(x), \Xi_{k+2}^{\alpha_{k+2}}) + \bar{\mathbf{1}}_{\alpha_{k}}(\pi_{k}^{\text{d}})d_{k}^{\pi_{k}^{\text{d}}}(x_{k}) \\ & -\bar{\mathbf{1}}_{\alpha_{k}}( \pi_{k}^{\text{a}})a_{k}^{\pi_{k}^{\text{a}}}(x_{k}).
    \end{split}
\end{equation*}
The diagonal terms in~\eqref{eq:Cost_to_go_al0} correspond to the saddle-point value of the \texttt{FlipDyn} states under identical defender and adversary actions. Notice, only under the action of $\pi_{k}^{\text{d}} = \pi_{k}^{\text{a}} = \alpha_{k}$ the takeover costs for both players are zero. The first row of $\Xi_{k+1}^{\alpha_{k}}$ corresponds to the saddle-point values of \texttt{FlipDyn} states chosen by the adversary, when the defender remains idle. Similarly, the first column corresponds to the saddle-point value of the \texttt{FlipDyn} states chosen by the defender under an idle adversary action. The remaining entries of $\Xi_{k+1}^{\alpha_{k}}$ correspond to the saddle-point value of the \texttt{FlipDyn} state $\alpha_{k}$ with the corresponding takeover costs. The entries of the cost-to-go matrix are constructed using the \texttt{FlipDyn} dynamics~\eqref{eq:FD_u_compact} and continuous state dynamics~\eqref{eq:CS_u}. Thus, at time $k$ for a given state $x$ and $\alpha_k$, the saddle-point value satisfies
\begin{equation}
    \label{eq:V_k_saddle_point}
	V^{\alpha_{k}}_k(x, \Xi_{k+1}^{\alpha_{k}}) = g_k^{\alpha_{k}}(x) + \Val(\Xi^{\alpha_{k}}_{k+1}), 
\end{equation}
where $\Val(X_{k+1}^{\alpha_{k}}):= \min_{y_{k}^{\alpha_{k}}} \max_{z_{k}^{\alpha_{k}}} y_{k}^{{\alpha_{k}}^{\tp}}X_{k+1}z_{k}^{\alpha_{k}}$, represents the (mixed) saddle-point value of the zero-sum matrix $X_{k+1}$ for the \texttt{FlipDyn} state $\alpha_{k}$, and $\Xi^{\alpha_{k}}_{k+1} \in \mathbb{R}^{|\epsilon(\alpha_{k})| \times |\epsilon(\alpha_{k})|}$ is the cost-to-go zero-sum matrix. The defender's and adversary's action results in either an entry within $\Xi^{\alpha_{k}}_{k+1}$ (if the matrix has a saddle point in pure strategies) or in the expected sense, resulting in a cost-to-go from state $x$ at time $k$.

 With the saddle-point values established in each of the \texttt{FlipDyn} states $\alpha_{k} \in V$, next, we will characterize the NE takeover strategies and the saddle-point values for the entire time horizon $L$.

\subsection{NE takeover strategies of the \texttt{FlipDyn-G} game}
To characterize the saddle-point value of the game, we restrict the state and takeover costs to a particular domain, stated in the following mild assumption.
\begin{assumption}\label{ast:general_costs}
    [Non-negative costs] For any time instant $k \in \mathcal{K}$, the state and takeover costs $g_k^{\alpha}(x), d_{k}^{\alpha}(x), a_{k}^{\alpha}(x),$ for all $x \in \mathcal{X},$ and $\alpha \in V$ are non-negative $(\mathbb{R}_{\geq 0})$. 
\end{assumption}

Assumption~\ref{ast:general_costs} enables us to compare the entries of the cost-to-go matrix without changes in the sign of the costs, thereby, characterizing the strategies of the players (pure or mixed strategies). 
Under Assumption~\ref{ast:general_costs}, we derive the following result to compute a  recursive saddle-point value for the horizon length $L$ and the corresponding NE takeover strategies for both the players in every node of the graph environment.

\begin{lemma}\label{lem:SPV_iter_gen_FlipDynG}
    Under Assumption~\ref{ast:general_costs}, the saddle-point value of the \texttt{FlipDyn-G} game~\eqref{eq:obj_def_E} at any time $k \in \mathcal{K}$, subject to the \texttt{FlipDyn} dynamics~\eqref{eq:FD_u_compact} and continuous state dynamics~\eqref{eq:CS_u} is given by:
    \begin{align}
    \label{eq:SPV_gen_FlipG}
        V_{k}^{\alpha_{k}*}(x,\Xi_{k+1}^{\alpha_{k}}) = g_{k}^{\alpha_{k}} + y_{k}^{\alpha_{k}*^{\tp}}\Xi_{k+1}^{\alpha_{k}}z_{k}^{\alpha_{k}*}, 
    \end{align}
    where $y_{k}^{\alpha_{k}*}$ and $z_{k}^{\alpha_{k}*}$ correspond to NE takeover policies obtained upon solving the zero-sum matrix defined by $\Xi_{k+1}^{\alpha_{k}}$ as a linear program~\cite{hespanha2017noncooperative}.
    The boundary condition of the saddle-point value recursion~\eqref{eq:SPV_gen_FlipG} at $k = L$ is given by:
    \begin{equation}\label{eq:b_cond_val_gen}
        \Xi_{L+2}^{\alpha_{L+1}} := \mathbf{0}_{m(\alpha_{L+1}) \times m(\alpha_{L+1})}, \forall \alpha_{L+1} \in V.    
    \end{equation}
    \frqed
\end{lemma}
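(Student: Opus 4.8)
The plan is to prove \eqref{eq:SPV_gen_FlipG}--\eqref{eq:b_cond_val_gen} by backward induction on the time index $k$, descending from the terminal stage to $k=1$, with the dynamic programming principle for finite-horizon zero-sum games supplying the recursion and the minimax theorem for finite matrix games supplying the value of each stage game. At the terminal stage $k=L+1$ no takeover actions remain, so the cost-to-go is vacuous; the choice $\Xi_{L+2}^{\alpha_{L+1}}:=\mathbf{0}$ in \eqref{eq:b_cond_val_gen} gives $\Val(\Xi_{L+2}^{\alpha_{L+1}})=0$, and \eqref{eq:V_k_saddle_point} then reads $V_{L+1}^{\alpha_{L+1}*}(x,\cdot)=g_{L+1}^{\alpha_{L+1}}(x)$, which matches the terminal term of the cost \eqref{eq:obj_def_alpha}. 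This anchors the induction.

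For the inductive step, fix $k<L+1$ and assume \eqref{eq:SPV_gen_FlipG} holds at time $k+1$ for every node in $V$. Fix the current \texttt{FlipDyn} state $\alpha_k$ and continuous state $x$. Because the objective \eqref{eq:obj_def_alpha} is additive in time and the only link between epoch $k$ and epoch $k+1$ is through the action pair $(\pi_k^{\mathrm d},\pi_k^{\mathrm a})$ — which determines the next node $\alpha_{k+1}$ via \eqref{eq:FD_u_compact}, the next continuous state via \eqref{eq:CS_u}, and the instantaneous takeover costs $\bar{\mathbf{1}}_{\alpha_k}(\pi_k^{\mathrm d})d_k^{\pi_k^{\mathrm d}}(x)$ and $-\bar{\mathbf{1}}_{\alpha_k}(\pi_k^{\mathrm a})a_k^{\pi_k^{\mathrm a}}(x)$ — the value of the remaining game splits as $g_k^{\alpha_k}(x)$ plus the value of the one-shot zero-sum game with defender action $\pi_k^{\mathrm d}$, adversary action $\pi_k^{\mathrm a}$, and payoff $v_{k+1}^{\alpha_{k+1}}(\pi_k^{\mathrm d},\pi_k^{\mathrm a})$, where $v_{k+1}^{\cdot}$ is evaluated through the inductive hypothesis on $V_{k+1}^{\cdot *}$. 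Listing the action pairs over the ordered neighbor set $\{\alpha_k,j_2,\dots,j_{m(\alpha_k)}\}$, one verifies that these payoffs are exactly the entries of $\Xi_{k+1}^{\alpha_k}$ in \eqref{eq:Cost_to_go_al0}: the first row (idle defender) and first column (idle adversary) realize the two ``else if'' branches of \eqref{eq:FlipDyn_update}, the diagonal realizes the ``$\pi_k^{\mathrm d}=\pi_k^{\mathrm a}$'' branch (with both takeover costs zero only at the $(\alpha_k,\alpha_k)$ entry), and every remaining entry realizes the ``otherwise'' branch in which the state stays at $\alpha_k$ while both takeover costs are charged.

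It remains to evaluate the stage game. Since $\Xi_{k+1}^{\alpha_k}$ is a finite real matrix, the minimax theorem guarantees a saddle point in behavioral strategies $(y_k^{\alpha_k*},z_k^{\alpha_k*})\in\Delta_{m(\alpha_k)}\times\Delta_{m(\alpha_k)}$, obtainable as the optimizer of the associated linear program~\cite{hespanha2017noncooperative}, with $\Val(\Xi_{k+1}^{\alpha_k})=y_k^{\alpha_k*^{\tp}}\Xi_{k+1}^{\alpha_k}z_k^{\alpha_k*}$. Substituting into the decomposition of the previous paragraph yields \eqref{eq:SPV_gen_FlipG}, which is consistent with \eqref{eq:V_k_saddle_point}, and closes the induction. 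Assumption~\ref{ast:general_costs} is used here to ensure that the cost-to-go entries keep a consistent sign as the recursion propagates, so that the linear-programming characterization of the saddle point (and, in the subsequent refinements, the entry-wise comparisons used to extract pure versus mixed NE strategies) remains valid.

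I expect the main obstacle to be the decomposition step, i.e., rigorously certifying that the $L$-stage game value is delivered by this node-wise, stagewise min--max recursion. This rests on (i) the additivity of \eqref{eq:obj_def_alpha}, (ii) both players acting with knowledge of the current pair $(\alpha_k,x_k)$ so that each stage game is well posed and no informational coupling across stages is lost, and (iii) a careful check that the layout of \eqref{eq:Cost_to_go_al0} — in particular the placement of the $d_k$ and $a_k$ terms and the identification of $\alpha_{k+1}$ for each action pair — faithfully encodes \eqref{eq:FD_u_compact}. The remaining pieces (existence of the matrix-game value, its computation via linear programming) are standard.
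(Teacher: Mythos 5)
Your proposal is correct and follows exactly the route the paper has in mind: the paper omits the proof, describing it as ``simple substitutions and the use of recursive optimality,'' which is precisely your backward induction anchored at the boundary condition~\eqref{eq:b_cond_val_gen}, with the stage game $\Xi_{k+1}^{\alpha_{k}}$ valued via the minimax theorem and its linear-programming characterization as in~\eqref{eq:V_k_saddle_point}. Your verification that the matrix entries of~\eqref{eq:Cost_to_go_al0} encode the branches of~\eqref{eq:FlipDyn_update}, and your handling of the terminal stage, match the paper's construction, so no further comparison is needed.
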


We skip the proof of the Lemma~\ref{lem:SPV_iter_gen_FlipDynG} as it involves simple substitutions and the use of recursive optimality. For a finite cardinality of the state space $\mathcal{X}$, \texttt{FlipDyn} states $V$, and a finite horizon $L$, Lemma~\ref{lem:SPV_iter_gen_FlipDynG} yields an exact (behavioral) saddle-point value of the \texttt{FlipDyn-G} game~\eqref{eq:obj_def_E}. However, the computational and storage complexities scale undesirably with the cardinality of $\mathcal{X}$, especially in continuous state spaces. For this purpose, in the next section, we will provide a parametric form of the saddle-point value especially in the case of scalar linear dynamics with quadratic costs.

\section{\texttt{FlipDyn-G} for scalar LQ Problems}\label{sec:FlipDyn_scalar_LQ}
To render a tractable solution for continuous state of the \texttt{FlipDyn-G} game, we restrict ourselves to scalar linear discrete-time dynamical system with quadratic costs (LQ problem). The discrete-time dynamics of a linear system at time instant $k \in \mathcal{K}$ in the \texttt{FlipDyn} state $\alpha_{k+1}$ is given by:
\begin{equation}
    \label{eq:scalar_alpha_dynamics}
    \begin{aligned}
        x_{k+1} & = F_{k}^{\alpha_{k+1}}(x_k) := f_k^{\alpha_{k+1}}x_k,
    \end{aligned}
\end{equation}
where $f_{k}^{\alpha_{k+1}} \in \mathbb{R}$ denotes the state transition scalar coefficient.
The stage and takeover costs are assumed to be quadratic for each player and given by:
\begin{gather}\label{eq:ST_MV_cost_Q}
        g_k(x,\alpha_k) = x^{2}\mathbf{g}_k^{\alpha_k}, \quad d_{k}^{\alpha_k}(x) = x^{2}\mathbf{d}_{k}^{\alpha_k}, \quad {a}_{k}^{\alpha_k}(x) = x^{2}\mathbf{a}_{k}^{\alpha_k}, 
\end{gather}
where $\mathbf{g}_k^{\alpha_k} \in \mathbb{R}, \mathbf{a}_{k}^{\alpha_k} \in \mathbb{R}, \mathbf{d}_{k}^{\alpha_k} \in \mathbb{R}$ are non-negative $(\mathbb{R}_{\geq 0})$ under Assumption~\ref{ast:general_costs}.

Under Assumption~\ref{ast:general_costs} for scalar dynamical systems of the form~\eqref{eq:scalar_alpha_dynamics}, we postulate a parametric form for the saddle-point value for each \texttt{FlipDyn} state $\alpha \in V$ of the form:
\begin{equation}
    \label{eq:para_form_al_QC}
    \begin{aligned}
        V_{k}^{\alpha_{k}}(x, \Xi_{k+1}^{\alpha_{k}}) \Rightarrow V_{k}^{\alpha_{k}}(x) := \mathbf{p}^{\alpha_{k}}_{k}x^{2}, \ \forall \alpha_{k} \in V, \ k \in \mathcal{K},
    \end{aligned}
\end{equation}
where $\mathbf{p}^{\alpha_{k}}_{k} \in \mathbb{R}_{\geq 0}$ corresponds to a non-negative coefficient for each of the \texttt{FlipDyn} states. Under the scalar linear dynamical system~\eqref{eq:scalar_alpha_dynamics}, takeover costs~\eqref{eq:ST_MV_cost_Q} and the parameteric form~\eqref{eq:para_form_al_QC}, the cost-to-go matrix $\hat{\Xi}_{k+1}^{\alpha_{k}}$ can be re-expressed as:
\begin{equation}\label{eq:Cost_to_go_scalar}
    \begin{aligned}
		& \begin{matrix} \hphantom{0000000} \text{$\alpha_{k}$} & \hphantom{00000000000} \text{$j_{2}$ } \hphantom{00000} & \dots & \hphantom{000000} \text{$j_{m(\alpha_{k})}$}\end{matrix} \\
		\begin{matrix} \text{$\alpha_{k}$} \\[3pt] \text{$j_{2}$} \\[3pt] \dots \\ \text{$j_{m(\alpha_{k})}$} \end{matrix} & \underbrace{\begin{bmatrix}
			\mathbf{v}_{k+1}^{\alpha_{k}}(\alpha_{k},\alpha_{k}) &  \dots & \dots &  \mathbf{v}_{k+1}^{j_{m(\alpha_{k})}}(\alpha_{k},j_{m(\alpha_{k})}) \\[3pt]
            \mathbf{v}_{k+1}^{j_{2}}(j_{2},\alpha_{k}) &  \mathbf{v}_{k+1}^{j_{2}}(j_{2},j_{2}) & \dots &  \mathbf{v}_{k+1}^{\alpha_{k}}(j_{2},j_{m(\alpha_{k})}) \\ \dots & \dots & \dots & \dots \\
			\mathbf{v}_{k+1}^{j_{m(\alpha_{k})}}(j_{m(\alpha_{k})},\alpha_{k}) &  \mathbf{v}_{k+1}^{\alpha_{k}}(j_{m(\alpha_{k})},j_{2}) & \dots & \mathbf{v}_{k+1}^{j_{m(\alpha_{k})}}(j_{m(\alpha_{k})},j_{m(\alpha_{k})})
		\end{bmatrix}}_{\hat{\Xi}_{k+1}^{\alpha_{k}}}
	\end{aligned},
\end{equation}
where $\mathbf{v}_{k+1}^{\alpha_{k}} (u,w)$ corresponds to the cost-to-go  term of a \texttt{FlipDyn} state independent of the term $x^{2}$, defined as:
\begin{equation*}
    \begin{split}
        \mathbf{v}_{k+1}^{\alpha_{k+1}}(\pi_{k}^{\text{d}},\pi_{k}^{\text{a}}) &:= (f^{\alpha_{k+1}}_{k})^{2}\mathbf{p}_{k+1}^{\alpha_{k+1}} + \bar{\mathbf{1}}_{\alpha_{k}}(\pi_{k}^{\text{d}})\mathbf{d}_{k}^{\pi_{k}^{\text{d}}} -\bar{\mathbf{1}}_{\alpha_{k}}( \pi_{k}^{\text{a}})\mathbf{a}_{k}^{\pi_{k}^{\text{a}}}.
    \end{split}
\end{equation*}
Notice the cost-to-go entries consists of the system transition coefficients and takeover costs, while factoring out the term $x^{2}$. 
Building on Lemma~\ref{lem:SPV_iter_gen_FlipDynG}, we present the following result, which provides the NE takeover policies of both players, and outlines the saddle-point value update of $\mathbf{p}_k^{\alpha_{k}}$ for any \texttt{FlipDyn} state.

\begin{lemma}\label{lem:NE_Val_scalar}
    Under Assumption~\ref{ast:general_costs}, at any time $k \in \mathcal{K}$, the saddle-point value parameter of the \texttt{FlipDyn-G} game~\eqref{eq:obj_def_E} for quadratic state and takeover costs~\eqref{eq:ST_MV_cost_Q}, subject to the \texttt{FlipDyn} dynamics~\eqref{eq:FD_u_compact} and scalar state dynamics~\eqref{eq:scalar_alpha_dynamics}, is given by:
    \begin{align}\label{eq:SPV_scalar_FlipG}
        \mathbf{p}_{k}^{\alpha_{k}*} = g_{k}^{\alpha_{k}} + y_{k}^{\alpha_{k}*^{\tp}}\hat{\Xi}_{k+1}^{\alpha_{k}}z_{k}^{\alpha_{k}*},
    \end{align}
    where $y_{k}^{\alpha_{k}*}$ and $z_{k}^{\alpha_{k}*}$ correspond to NE takeover policies obtained upon solving the zero-sum matrix $\hat{\Xi}_{k+1}^{\alpha_{k}}$ as a linear program~\cite{hespanha2017noncooperative}.
    The boundary condition of the saddle-point value recursion~\eqref{eq:SPV_gen_FlipG} at $k = L$ is given by:
    \begin{align}\label{eq:b_cond_scalar_gen}
        \hat{\Xi}_{L+2}^{\alpha_{L+1}} := \mathbf{0}_{m(\alpha_{L+1}) \times m(\alpha_{L+1})}, \forall \alpha_{L+1} \in V.    
    \end{align}
    \frqed
\end{lemma}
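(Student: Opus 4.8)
The plan is to establish the parametric form~\eqref{eq:para_form_al_QC} by backward induction on $k$, running the recursion from $k=L+1$ down to $k=1$ and invoking Lemma~\ref{lem:SPV_iter_gen_FlipDynG} at each step. First I would verify the base case: at $k=L+1$ there are no further takeover actions, so $V_{L+1}^{\alpha_{L+1}}(x) = g_{L+1}^{\alpha_{L+1}}(x) = \mathbf{g}_{L+1}^{\alpha_{L+1}} x^{2}$ by~\eqref{eq:ST_MV_cost_Q}, which gives $\mathbf{p}_{L+1}^{\alpha_{L+1}} = \mathbf{g}_{L+1}^{\alpha_{L+1}} \geq 0$ and the boundary condition~\eqref{eq:b_cond_scalar_gen} on $\hat{\Xi}_{L+2}^{\alpha_{L+1}}$ follows by convention since the cost-to-go beyond the horizon is zero.

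For the inductive step, I assume $V_{k+1}^{\alpha_{k+1}}(x) = \mathbf{p}_{k+1}^{\alpha_{k+1}} x^{2}$ with $\mathbf{p}_{k+1}^{\alpha_{k+1}} \in \mathbb{R}_{\geq 0}$ for every $\alpha_{k+1} \in V$. Substituting the scalar linear dynamics~\eqref{eq:scalar_alpha_dynamics} into each cost-to-go entry $v_{k+1}^{\alpha_{k+1}}(\pi_{k}^{\text{d}},\pi_{k}^{\text{a}})$ of~\eqref{eq:Cost_to_go_al0}, using the quadratic takeover costs~\eqref{eq:ST_MV_cost_Q}, I get $v_{k+1}^{\alpha_{k+1}}(\pi_{k}^{\text{d}},\pi_{k}^{\text{a}}) = x^{2}\,\mathbf{v}_{k+1}^{\alpha_{k+1}}(\pi_{k}^{\text{d}},\pi_{k}^{\text{a}})$ with the $x$-independent entry as defined before~\eqref{eq:Cost_to_go_scalar}; hence $\Xi_{k+1}^{\alpha_{k}} = x^{2}\,\hat{\Xi}_{k+1}^{\alpha_{k}}$. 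Since $x^{2} \geq 0$ is a common nonnegative scalar, it factors out of the min–max operator: $\Val(\Xi_{k+1}^{\alpha_{k}}) = x^{2}\,\Val(\hat{\Xi}_{k+1}^{\alpha_{k}})$, and crucially the optimizing behavioral strategies $y_{k}^{\alpha_{k}*}, z_{k}^{\alpha_{k}*}$ solving the linear program for $\hat{\Xi}_{k+1}^{\alpha_{k}}$ are unchanged by this positive scaling — this is where Assumption~\ref{ast:general_costs} is used, as it guarantees the sign of $x^{2}$ does not flip entry comparisons. Applying Lemma~\ref{lem:SPV_iter_gen_FlipDynG} then yields $V_{k}^{\alpha_{k}*}(x, \Xi_{k+1}^{\alpha_{k}}) = \mathbf{g}_{k}^{\alpha_{k}} x^{2} + x^{2}\, y_{k}^{\alpha_{k}*^{\tp}}\hat{\Xi}_{k+1}^{\alpha_{k}} z_{k}^{\alpha_{k}*} = \big(\mathbf{g}_{k}^{\alpha_{k}} + y_{k}^{\alpha_{k}*^{\tp}}\hat{\Xi}_{k+1}^{\alpha_{k}} z_{k}^{\alpha_{k}*}\big) x^{2}$, which is precisely the claimed form with $\mathbf{p}_{k}^{\alpha_{k}*}$ given by~\eqref{eq:SPV_scalar_FlipG}. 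To close the induction I would check $\mathbf{p}_{k}^{\alpha_{k}*} \geq 0$: $\mathbf{g}_{k}^{\alpha_{k}} \geq 0$ and every entry of $\hat{\Xi}_{k+1}^{\alpha_{k}}$ is a sum of $(f_{k}^{\alpha_{k+1}})^{2}\mathbf{p}_{k+1}^{\alpha_{k+1}} \geq 0$ plus a defender takeover cost and minus an adversary takeover cost, so nonnegativity of the value $y^{*\tp}\hat{\Xi}z^{*}$ needs a short argument — e.g., the defender can always play the pure strategy "stay at $\alpha_{k}$" (the first row), whose entries are of the form $(f_{k}^{j})^{2}\mathbf{p}_{k+1}^{j} - \mathbf{a}_{k}^{j} \cdot \bar{\mathbf{1}}$, which is not obviously nonnegative, so instead one argues the adversary's pure "stay" column gives entries $(f_{k}^{\alpha_{k}})^{2}\mathbf{p}_{k+1}^{\alpha_{k}} + \mathbf{d}_{k}^{\cdot}\bar{\mathbf{1}} \geq 0$, and a careful look at the $(\alpha_k,\alpha_k)$ entry plus the structure of the game guarantees $\Val \geq 0$.

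The main obstacle is the last point: confirming $\mathbf{p}_{k}^{\alpha_{k}*} \in \mathbb{R}_{\geq 0}$ so the postulated parametric form is self-consistent across the recursion. This requires showing the saddle-point value of $\hat{\Xi}_{k+1}^{\alpha_{k}}$ is at least $-\mathbf{g}_k^{\alpha_k}$ — in fact nonnegative — despite the adversary takeover costs appearing with negative sign. I expect this follows from observing that along the diagonal the "stay" entry $\mathbf{v}_{k+1}^{\alpha_k}(\alpha_k,\alpha_k) = (f_k^{\alpha_k})^2 \mathbf{p}_{k+1}^{\alpha_k} \geq 0$ has zero takeover cost, and that whenever the adversary moves off $\alpha_k$ the defender can match or counter (by the \texttt{FlipDyn} update~\eqref{eq:FD_u_compact} the defender copying the adversary's target forces that target), so the value is bounded below by a nonnegative quantity; a clean way is to note the defender's security level from playing "stay at $\alpha_k$" against the worst adversary response still lands on a state whose continuation value is nonnegative, and rule out the degenerate case by the structure of the cost matrix. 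Everything else is routine substitution and the factoring-out argument already sketched.
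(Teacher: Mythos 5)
Your argument is correct and follows essentially the same route as the paper, which omits the proof and states only that substituting the scalar dynamics~\eqref{eq:scalar_alpha_dynamics} and quadratic costs~\eqref{eq:ST_MV_cost_Q} into the recursion of Lemma~\ref{lem:SPV_iter_gen_FlipDynG} and factoring out $x^{2}$ yields~\eqref{eq:SPV_scalar_FlipG} --- exactly your backward induction. The nonnegativity point you flag as the main obstacle is not really one: the recursion and the scaling identity $\Val(x^{2}\hat{\Xi}_{k+1}^{\alpha_{k}})=x^{2}\Val(\hat{\Xi}_{k+1}^{\alpha_{k}})$ need only $x^{2}\geq 0$, and $\mathbf{p}_{k}^{\alpha_{k}*}\geq 0$ follows immediately from the adversary's idle column you already identified, whose entries $(f_{k}^{\alpha_{k}})^{2}\mathbf{p}_{k+1}^{\alpha_{k}}$ and $(f_{k}^{j})^{2}\mathbf{p}_{k+1}^{j}+\mathbf{d}_{k}^{j}$ are all nonnegative under the inductive hypothesis, so the maximizing player secures a nonnegative value and the induction closes.
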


Substituting the scalar state dynamics~\eqref{eq:scalar_alpha_dynamics} along with state and takeover costs~\eqref{eq:ST_MV_cost_Q} yields the NE strategies and saddle-point value parameters~\eqref{eq:SPV_scalar_FlipG}.
We skip the proof of Lemma~\ref{lem:NE_Val_scalar} for brevity. Lemma~\ref{lem:NE_Val_scalar} presents a complete solution for the \texttt{FlipDyn-G}~\eqref{eq:obj_def_E} game with NE takeover strategies independent of state of the scalar dynamical system. 
\begin{remark}
   The presented analysis also extends to models that are affine in the sate with a bias term. Such a model will require the saddle-point value parameterization of the form:
\[
V_{k}^{\alpha_{k}}(x) := \mathbf{p}^{\alpha_{k}}_{k}x^{2} + \mathbf{q}^{\alpha_{k}}_{k}, \ \forall \alpha_{k} \in V, \ k \in \mathcal{K},
\]
where $\mathbf{p}^{\alpha_{k}}_{k}, \mathbf{q}^{\alpha_{k}}_{k} \in \mathbb{R}_{\geq 0}$ corresponds to a non-negative coefficient for each of the \texttt{FlipDyn} states.
\end{remark}

In the following subsection, we will derive closed-form expressions of the \texttt{FlipDyn-G} game for a special graph structure and show how the structure  represents the original \texttt{FlipDyn} game~\cite{FlipDyn_banik2022}.

\subsection{Dual Deter \texttt{FlipDyn-G} game}
We look at a special case of the graph environment, which consists of a start and an end node connecting only one other node, while the remaining nodes connect two different nodes, termed as the \emph{dual deter} model. This model can be viewed as finite state Markov chain birth-death process~\cite{li1989overload}. We assume that such a dual deter model has an ordered set of nodes from node $0$ to $N$, which implies that the dual deter model has a total of $|V| = N+1$ nodes, illustrated in Figure~\ref{fig:Dual_in_out_graph}. 
\begin{figure}[ht]
    \centering
    \includegraphics[width = 0.5\linewidth]{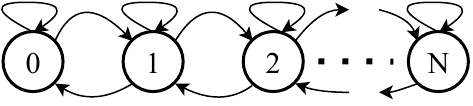}
    \caption{A graph consisting of $N$ nodes .}
    \label{fig:Dual_in_out_graph}
\end{figure}
 
 A key difference compared to the general graph model lies in the action space of the defender and adversary. At any node $\alpha_{k} \in \{1,2,\dots,N-1\}$, the action space of the adversary is $\pi_{k}^{\text{a}} := \{\alpha_{k}, \overline{\alpha}\}, \ \overline{\alpha} \in \{V | \overline{\alpha} > \alpha_{k} \}$, and of the defender is $\pi_{k}^{\text{d}} := \{\alpha_{k}, \underline{\alpha}\}, \ \underline{\alpha} \in \{V | \underline{\alpha} < \alpha_{k}\}$. The action space of both the defender and adversary in the start and end node $\alpha_{k} = \{0,N\}$ are given as $\pi_{k}^{\text{d}} := \{\alpha_{k},\tau\}$ and $\pi_{k}^{\text{a}} := \{\alpha_{k},\tau\}$, where $\tau$ represents a takeover action in the node $\alpha_{k}$, preventing transition to other nodes.  
Such an action space and model represents the defender deterring an adversary from escalating through the graph. The \texttt{FlipDyn} state updates  in such a dual deter model as follows:
\begin{equation}
    \label{eq:FD_dual_deter}
    \alpha_{k+1} = 
    \begin{cases}
        \alpha_{k}, & \text{if } \pi_{k}^{\text{d}} = \pi_{k}^{\text{a}} | \alpha_{k} = \{0,N\}, \\
        1, & \text{else if } \pi_{k}^{\text{a}} = \tau | \pi_{k}^{\text{d}} = 0, \alpha_{k} = 0, \\
        N-1, & \text{else if } \pi_{k}^{\text{d}} = \tau | \pi_{k}^{\text{a}} = N, \alpha_{k} = N, \\
        \pi^{d}_{k}, & \text{else if } \pi_{k}^{\text{d}} = \pi_{k}^{\text{a}}, \\
        \underline{\alpha}, & \text{else if } \pi_{k}^{\text{d}} = \underline{\alpha} | \pi_{k}^{\text{a}} = \alpha_{k} , \\ 
        \overline{\alpha}, & \text{else if} \pi_{k}^{\text{a}} = \overline{\alpha}| \pi_{k}^{\text{d}} = \alpha_{k}\}, \\
        \alpha_{k}, & \text{otherwise}.
    \end{cases}
\end{equation}

We characterize the NE strategies and saddle-point values of the dual deter model under the assumption of a scalar linear dynamical system~\eqref{eq:scalar_alpha_dynamics} and quadratic costs~\eqref{eq:ST_MV_cost_Q} with a parameterized saddle-point value~\eqref{eq:para_form_al_QC}. Such an action space leads to a reduced dimension of the cost-to-go matrix independent of the state term $x^{2}$ at any node $\alpha_{k} \in \{1,2,\dots,N-1\}$, given by:
\begin{equation}\label{eq:Cost_to_go_dual}
    \begin{aligned}
		& \begin{matrix} \hphantom{0000000} \alpha_{k} & \hphantom{00000000000} \overline{\alpha} \hphantom{00000} \end{matrix} \\
		\begin{matrix} \alpha_{k} \\[3pt] \underline{\alpha} \end{matrix} & \begin{bmatrix}
			\left(f^{\alpha_{k}}_{k}\right)^2 \mathbf{p}_{k+1}^{\alpha_{k}} & \left(f^{\overline{\alpha}}_{k}\right)^2 \mathbf{p}_{k+1}^{\overline{\alpha}} - \mathbf{a}_{k}^{\alpha_{k}} \\[3pt]
            \left(f^{\underline{\alpha}}_{k}\right)^2 \mathbf{p}_{k+1}^{\underline{\alpha}} + \mathbf{d}_{k}^{\alpha_{k}} & \left(f^{\alpha_{k}}_{k}\right)^2 \mathbf{p}_{k+1}^{\alpha_{k}} + \mathbf{d}_{k}^{\alpha_{k}} - \mathbf{a}_{k}^{\alpha_{k}}
		\end{bmatrix}
	\end{aligned}.
\end{equation}
Similarly, the cost-to-go matrix for the start node $\alpha_{k} = 0$ independent of the state term $x^{2}$ is given by:
\begin{equation}\label{eq:Cost_to_go_start}
    \begin{aligned}
		& \begin{matrix} \hphantom{0000000} 0 & \hphantom{00000000000} \tau \hphantom{00000} \end{matrix} \\
		\begin{matrix} 0 \\[3pt] \tau \end{matrix} & \begin{bmatrix}
			\left(f^{0}_{k}\right)^2 \mathbf{p}_{k+1}^{0} & \left(f^{1}_{k}\right)^2 \mathbf{p}_{k+1}^{1} - \mathbf{a}_{k}^{0} \\[3pt]
            \left(f^{0}_{k}\right)^2 \mathbf{p}_{k+1}^{0} + \mathbf{d}_{k}^{0} & \left(f^{0}_{k}\right)^2 \mathbf{p}_{k+1}^{0} + \mathbf{d}_{k}^{0} - \mathbf{a}_{k}^{0}
		\end{bmatrix}
	\end{aligned},
\end{equation}
whereas for the end node $\alpha_{k} = N$, we have:
\begin{equation}\label{eq:Cost_to_go_end}
    \begin{aligned}
		& \begin{matrix} \hphantom{0000000} N & \hphantom{0000000000000000000} \tau \hphantom{00000} \end{matrix} \\
		\begin{matrix} N \\[3pt] \tau \end{matrix} & \begin{bmatrix}
			\left(f^{N}_{k}\right)^2 \mathbf{p}_{k+1}^{N} & \left(f^{N}_{k}\right)^2 \mathbf{p}_{k+1}^{N} - \mathbf{a}_{k}^{N} \\[3pt]
            \left(f^{N-1}_{k}\right)^2 \mathbf{p}_{k+1}^{N-1} + \mathbf{d}_{k}^{N} & \left(f^{N}_{k}\right)^2 \mathbf{p}_{k+1}^{N} + \mathbf{d}_{k}^{N} - \mathbf{a}_{k}^{N}
		\end{bmatrix}
	\end{aligned}.
\end{equation}
The transition of the nodes in~\eqref{eq:Cost_to_go_dual} follows from the \texttt{FlipDyn} dynamics~\eqref{eq:FD_u_compact}. Next, we present the NE takeover in both pure and mixed strategies of both the players along with the saddle-point value parameter $\mathbf{p}_{k}^{\alpha_{k}}$ for every node in the dual deter model.

\begin{theorem}\label{th:NE_Val_scalar_dual_deter}
      The unique NE takeover strategies of the \texttt{FlipDyn-G} game~\eqref{eq:obj_def_E} at any time $k \in \mathcal{K}$ for quadratic state and takeover costs~\eqref{eq:ST_MV_cost_Q}, subject to the \texttt{FlipDyn} dynamics~\eqref{eq:FD_dual_deter} and scalar state dynamics~\eqref{eq:scalar_alpha_dynamics} are given by:
      
      \underline{Case i) - $\alpha_{k} = 0$}
      \begin{align}
         \begin{split}\label{eq:def_TP_alpha_0}
                 y^{0*}_{k}  = \begin{cases}
                    \begin{bmatrix} \dfrac{\mathbf{a}^{0}_{k}}{\hat{\mathbf{p}}_{k+1}} & 1 - \dfrac{\mathbf{a}^{0}_{k}}{\hat{\mathbf{p}}_{k+1}} 
                \end{bmatrix}^{\tp}, & \text{if } \ 
                    \begin{aligned}
                        & \hat{\mathbf{p}}_{k+1} > \mathbf{a}^{0}_{k},  \hat{\mathbf{p}}_{k+1} > \mathbf{d}^{0}_{k},
                    \end{aligned}
                    \\[15pt]
                    \begin{bmatrix} \hphantom{00} 1 \hphantom{0} & \hphantom{0000} 0 \hphantom{000}
                \end{bmatrix}^{\tp}, & \text{otherwise,}
                \end{cases} 
        \end{split}\\
        \begin{split}\label{eq:adv_TP_alpha_0}
                 z^{0*}_{k}  = \begin{cases}
                    \begin{bmatrix}  1 - \dfrac{\mathbf{d}^{0}_{k}}{\hat{\mathbf{p}}_{k+1}} & \dfrac{\mathbf{d}^{0}_{k}}{\hat{\mathbf{p}}_{k+1}}
                \end{bmatrix}^{\tp}, & \text{if } \ 
                    \begin{aligned}
                        & \hat{\mathbf{p}}_{k+1} > \mathbf{a}^{0}_{k},   \hat{\mathbf{p}}_{k+1} > \mathbf{d}^{0}_{k},
                    \end{aligned}
                    \\[15pt]
                    \begin{bmatrix} \hphantom{00} 0 \hphantom{00} & \hphantom{000} 1 \hphantom{000}
                \end{bmatrix}^{\tp}, & \text{if } \ 
                    \begin{aligned}
                        & \hat{\mathbf{p}}_{k+1} > \mathbf{a}^{0}_{k},   \hat{\mathbf{p}}_{k+1} \leq \mathbf{d}^{0}_{k},
                    \end{aligned}
                    \\[15pt]
                \begin{bmatrix} \hphantom{00} 1 \hphantom{00} & \hphantom{000} 0 \hphantom{000}
                \end{bmatrix}^{\tp}, & \text{otherwise,}
                \end{cases} 
        \end{split}
        \end{align}
       
        and the saddle-point value parameter satisfies:
        \begin{align}\label{eq:Val_scalar_alpha_0}
            \mathbf{p}_{k}^{0} = 
            \begin{cases}
                \begin{aligned}
    				 \mathbf{g}_{k}^{0} + & (f_{k}^{0})^{2}\mathbf{p}_{k+1}^{0} + \mathbf{d}_{k}^{0} - \dfrac{\mathbf{a}^{0}_{k}\mathbf{d}^{0}_{k}}{\hat{\mathbf{p}}_{k+1}},
                \end{aligned} & \text{if } \ 
                    \begin{aligned}
                        & \hat{\mathbf{p}}_{k+1} > \mathbf{a}^{0}_{k},   \hat{\mathbf{p}}_{k+1} > \mathbf{d}^{0}_{k},
                    \end{aligned}
                    \\[15pt]
                \begin{aligned}
    				\mathbf{g}_{k}^{0} + & (f_{k}^{1})^{2}\mathbf{p}_{k+1}^{1} - \mathbf{a}_{k}^{0} ,
                \end{aligned} & \text{if } \ 
                    \begin{aligned}
                        & \hat{\mathbf{p}}_{k+1} > \mathbf{a}^{0}_{k},   \hat{\mathbf{p}}_{k+1} \leq \mathbf{d}^{0}_{k},
                    \end{aligned}
                    \\[15pt]
                    \mathbf{g}_{k}^{0} + (f_{k}^{0})^{2}\mathbf{p}_{k+1}^{0},
                 & \text{otherwise},
    		\end{cases} 
    	\end{align}
        where $\hat{\mathbf{p}}_{k+1}:= (f_{k}^{1})^{2}\mathbf{p}_{k+1}^{1} - (f_{k}^{0})^{2}\mathbf{p}_{k+1}^{0}$.

        \underline{Case ii) - $\alpha_{k} = \{1,2,\dots,N-1\}$}

        \begin{align}
         \begin{split}\label{eq:def_TP_alpha_k}
                 y^{\alpha_{k}*}_{k}  = \begin{cases}
                    \begin{bmatrix} \hphantom{000}1\hphantom{000} & \hphantom{000}0\hphantom{000} 
                \end{bmatrix}^{\tp}, & \text{if } \ 
                    \begin{aligned}
                        & \hphantom{-.}\tilde{\mathbf{p}}_{k+1}^{\alpha_{k}} < \mathbf{d}_{k}^{\alpha_{k}},   -\check{\mathbf{p}}_{k+1}^{\alpha_{k}} < \mathbf{d}_{k}^{\alpha_{k}},
                    \end{aligned}
                    \\[15pt]
                    \begin{bmatrix} \hphantom{000}0\hphantom{000} & \hphantom{000}1 \hphantom{000}
                \end{bmatrix}^{\tp}, & \text{else if }  
                    \begin{aligned}
                        & \hphantom{-.}\tilde{\mathbf{p}}_{k+1}^{\alpha_{k}} > \mathbf{d}_{k}^{\alpha_{k}}, -\check{\mathbf{p}}_{k+1}^{\alpha_{k}} > \mathbf{d}_{k}^{\alpha_{k}},
                    \end{aligned}\\[15pt]
                \begin{bmatrix} \dfrac{\tilde{\mathbf{p}}_{k+1}^{\alpha_{k}}-\mathbf{a}_{k}^{\alpha_{k}}}{\tilde{\mathbf{p}}_{k+1}^{\alpha_{k}} + \check{\mathbf{p}}_{k+1}^{\alpha_{k}}} &  \dfrac{\check{\mathbf{p}}_{k+1}^{\alpha_{k}} + \mathbf{a}_{k}^{\alpha_{k}}}{\tilde{\mathbf{p}}_{k+1}^{\alpha_{k}} + \check{\mathbf{p}}_{k+1}^{\alpha_{k}}} \\ 
                \end{bmatrix}^{\tp}, & \text{otherwise } 
                \end{cases} 
        \end{split}\\
        \begin{split}\label{eq:adv_TP_alpha_k}
                 z^{0*}_{k}  = \begin{cases}
                    \begin{bmatrix}  \hphantom{000}1\hphantom{000} & \hphantom{000}0\hphantom{000}
                \end{bmatrix}^{\tp}, & \text{if } \ 
                    \begin{aligned}
                        & -\tilde{\mathbf{p}}_{k+1}^{\alpha_{k}} < \mathbf{a}^{\alpha_{k}}_{k},   \check{\mathbf{p}}_{k+1}^{\alpha_{k}} < \mathbf{a}^{\alpha_{k}}_{k},
                    \end{aligned}
                    \\[15pt]
                    \begin{bmatrix} \hphantom{000}0\hphantom{000} & \hphantom{000}1\hphantom{000}
                \end{bmatrix}^{\tp}, & \text{if } \ 
                    \begin{aligned}
                        & -\tilde{\mathbf{p}}_{k+1}^{\alpha_{k}} > \mathbf{a}^{\alpha_{k}}_{k},  \check{\mathbf{p}}_{k+1}^{\alpha_{k}} > \mathbf{a}^{\alpha_{k}}_{k},
                    \end{aligned}
                    \\[15pt]
                \begin{bmatrix} \dfrac{\tilde{\mathbf{p}}_{k+1}^{\alpha_{k}} + \mathbf{d}_{k+1}^{\alpha_{k}}}{\tilde{\mathbf{p}}_{k+1}^{\alpha_{k}} + \check{\mathbf{p}}_{k+1}^{\alpha_{k}}} & \dfrac{\check{\mathbf{p}}_{k+1}^{\alpha_{k}} - \mathbf{d}_{k+1}^{\alpha_{k}}}{\tilde{\mathbf{p}}_{k+1}^{\alpha_{k}} + \check{\mathbf{p}}_{k+1}^{\alpha_{k}}}\\
                \end{bmatrix}^{\tp}, & \text{otherwise,}
                \end{cases} 
        \end{split}
        \end{align}
        and the saddle-point value parameter satisfies:
        \begin{align}\label{eq:Val_scalar_alpha_k}
            \mathbf{p}_{k}^{\alpha_{k}} = 
            \begin{cases}
                \begin{aligned}
    				 \mathbf{g}_{k}^{\alpha_{k}} + & (f_{k}^{\alpha_{k}})^{2}\mathbf{p}_{k+1}^{\alpha_{K}},
                \end{aligned} & \text{if } \ 
                    \begin{aligned}
                        & -\tilde{\mathbf{p}}_{k+1}^{\alpha_{k}} < \mathbf{a}_{k}^{\alpha_{k}}, \check{\mathbf{p}}_{k+1}^{\alpha_{k}} < \mathbf{a}_{k+1}^{\alpha_{k}}, \\[2pt]
                        & \hphantom{-.}\check{\mathbf{p}}_{k+1}^{\alpha_{k}} < \mathbf{d}_{k+1}^{\alpha_{k}}, 
                    \end{aligned}
                    \\[20pt]
                \begin{aligned}
                    &\mathbf{g}_{k}^{\alpha_{k}} + (f_{k}^{\underline{\alpha}})^{2}\mathbf{p}_{k+1}^{\underline{\alpha}} + \mathbf{d}_{k+1}^{\alpha_{k}},
                \end{aligned} & \text{if } \ 
                    \begin{aligned}
                        & -\tilde{\mathbf{p}}_{k+1}^{\alpha_{k}} < \mathbf{a}_{k}^{\alpha_{k}}, \check{\mathbf{p}}_{k+1}^{\alpha_{k}} < \mathbf{a}_{k+1}^{\alpha_{k}}, \\[2pt]
                        & \hphantom{-.}\check{\mathbf{p}}_{k+1}^{\alpha_{k}} > \mathbf{d}_{k+1}^{\alpha_{k}}, 
                    \end{aligned}
                    \\[20pt]
                \begin{aligned}
                    & \mathbf{g}_{k}^{\alpha_{k}} + (f_{k}^{\overline{\alpha}})^{2}\mathbf{p}_{k+1}^{\overline{\alpha}}- \mathbf{a}_{k+1}^{\alpha_{k}},
                \end{aligned} & \text{if } \ 
                    \begin{aligned}
                        & -\tilde{\mathbf{p}}_{k+1}^{\alpha_{k}} > \mathbf{a}_{k}^{\alpha_{k}}, \check{\mathbf{p}}_{k+1}^{\alpha_{k}} > \mathbf{a}_{k+1}^{\alpha_{k}}, \\[2pt]
                        & -\tilde{\mathbf{p}}_{k+1}^{\alpha_{k}} < \mathbf{d}_{k+1}^{\alpha_{k}}, 
                    \end{aligned}
                    \\[20pt]
                \begin{aligned}
                   & \mathbf{g}_{k}^{\alpha_{k}} + (f_{k}^{\alpha_{k}})^{2}\mathbf{p}_{k+1}^{\alpha_{k}}  - \mathbf{a}_{k+1}^{\alpha_{k}} + \mathbf{d}_{k+1}^{\alpha_{k}},
                \end{aligned} & \text{if } \ 
                    \begin{aligned}
                        & -\tilde{\mathbf{p}}_{k+1}^{\alpha_{k}} > \mathbf{a}_{k}^{\alpha_{k}}, \check{\mathbf{p}}_{k+1}^{\alpha_{k}} > \mathbf{a}_{k+1}^{\alpha_{k}}, \\[2pt]
                        & -\tilde{\mathbf{p}}_{k+1}^{\alpha_{k}} > \mathbf{d}_{k+1}^{\alpha_{k}}, 
                    \end{aligned}
                    \\[20pt]
                \begin{aligned}
    				& \mathbf{g}_{k}^{0} + \dfrac{(f_{k}^{\alpha_{k}})^{4}(\mathbf{p}_{k+1}^{\alpha_{k}})^2 + \mathbf{a}_{k}^{\alpha_{k}}\mathbf{d}_{k}^{\alpha_{k}}}{\tilde{\mathbf{p}}_{k+1}^{\alpha_{k}} + \check{\mathbf{p}}_{k+1}^{\alpha_{k}}} \\[5pt] 
                    & + \dfrac{\tilde{\mathbf{p}}_{k+1}^{\alpha_{k}}\mathbf{d}_{k}^{\alpha_{k}} - \check{\mathbf{p}}_{k+1}^{\alpha_{k}}\mathbf{a}_{k}^{\alpha_{k}}}{\tilde{\mathbf{p}}_{k+1}^{\alpha_{k}} + \check{\mathbf{p}}_{k+1}^{\alpha_{k}}} \\[5pt]
                    & - \dfrac{(f_{k}^{\underline{\alpha}})^{2}\mathbf{p}_{k+1}^{\underline{\alpha}}(f_{k}^{\overline{\alpha}})^{2}\mathbf{p}_{k+1}^{\overline{\alpha}}}{\tilde{\mathbf{p}}_{k+1}^{\alpha_{k}} + \check{\mathbf{p}}_{k+1}^{\alpha_{k}}},
                \end{aligned} & \text{otherwise, }
    		\end{cases} 
    	\end{align}
        where
        \begin{gather*}
            \tilde{\mathbf{p}}_{k+1}^{\alpha_{k}} := (f_{k}^{\alpha_{k}})^{2}\mathbf{p}^{\alpha_{k}}_{k+1} - (f_{k}^{\underline{\alpha}})^{2}\mathbf{p}^{\underline{\alpha}}_{k+1},            \check{\mathbf{p}}_{k+1}^{\alpha_{k}} := (f_{k}^{\alpha_{k}})^{2}\mathbf{p}^{\alpha_{k}}_{k+1} - (f_{k}^{\overline{\alpha}})^{2}\mathbf{p}^{\overline{\alpha}}_{k+1}.
        \end{gather*}
        
    \underline{Case iii) - $\alpha_{k} = N$}
      \begin{align}
         \begin{split}\label{eq:def_TP_alpha_N}
                 y^{N*}_{k}  = \begin{cases}
                    \begin{bmatrix} 1 - \dfrac{\mathbf{a}^{N}_{k}}{\bar{\mathbf{p}}_{k+1}} 
 & \dfrac{\mathbf{a}^{N}_{k}}{\bar{\mathbf{p}}_{k+1}} 
                \end{bmatrix}^{\tp}, & \text{if } \ 
                    \begin{aligned}
                        & \bar{\mathbf{p}}_{k+1} > \mathbf{a}^{N}_{k},  \bar{\mathbf{p}}_{k+1} > \mathbf{d}^{N}_{k},
                    \end{aligned}
                    \\[15pt]
                    \begin{bmatrix} \hphantom{00} 0 \hphantom{0} & \hphantom{0000} 1 \hphantom{000}
                \end{bmatrix}^{\tp}, & \text{if } \ 
                    \begin{aligned}
                        & \bar{\mathbf{p}}_{k+1} \leq \mathbf{a}^{N}_{k},  \bar{\mathbf{p}}_{k+1} > \mathbf{d}^{N}_{k},
                    \end{aligned}
                    \\[15pt]
                    \begin{bmatrix} \hphantom{00} 1 \hphantom{0} & \hphantom{0000} 0 \hphantom{000}
                \end{bmatrix}^{\tp}, & \text{otherwise,}
                \end{cases} 
        \end{split}\\
        \begin{split}\label{eq:adv_TP_alpha_N}
                 z^{N*}_{k}  = \begin{cases}
                    \begin{bmatrix} \dfrac{\mathbf{d}^{N}_{k}}{\bar{\mathbf{p}}_{k+1}} & 1 - \dfrac{\mathbf{d}^{N}_{k}}{\bar{\mathbf{p}}_{k+1}} 
                \end{bmatrix}^{\tp}, & \text{if } \ 
                    \begin{aligned}
                        & \hat{\mathbf{p}}_{k+1} > \mathbf{a}^{N}_{k},  \hat{\mathbf{p}}_{k+1} > \mathbf{d}^{N}_{k},
                    \end{aligned}
                    \\[15pt]
                \begin{bmatrix} \hphantom{00} 1 \hphantom{00} & \hphantom{000} 0 \hphantom{000}
                \end{bmatrix}^{\tp}, & \text{otherwise,}
                \end{cases} 
        \end{split}
        \end{align}
       
        and the saddle-point value parameter is given by:
        \begin{align}\label{eq:Val_scalar_alpha_N}
            \mathbf{p}_{k}^{N} = 
            \begin{cases}
                \begin{aligned}
    				 \mathbf{g}_{k}^{N} + & (f_{k}^{N})^{2}\mathbf{p}_{k+1}^{N} - \mathbf{d}_{k}^{N} + \dfrac{\mathbf{a}^{0}_{k}\mathbf{d}^{0}_{k}}{\hat{\mathbf{p}}_{k+1}},
                \end{aligned} & \text{if } \ 
                    \begin{aligned}
                        & \hat{\mathbf{p}}_{k+1} > \mathbf{a}^{N}_{k},  \hat{\mathbf{p}}_{k+1} > \mathbf{d}^{N}_{k},
                    \end{aligned}
                    \\[15pt]
                \begin{aligned}
    				\mathbf{g}_{k}^{N} + & (f_{k}^{N-1})^{2}\mathbf{p}_{k+1}^{N-1} + \mathbf{d}_{k}^{N},
                \end{aligned} & \text{if } \ 
                    \begin{aligned}
                        & \hat{\mathbf{p}}_{k+1} > \mathbf{a}^{N}_{k},  \hat{\mathbf{p}}_{k+1} \leq \mathbf{d}^{N}_{k},
                    \end{aligned}
                    \\[15pt]
                    \mathbf{g}_{k}^{N} + (f_{k}^{N})^{2}\mathbf{p}_{k+1}^{N},
                 & \text{otherwise},
    		\end{cases} 
    	\end{align}
        where $\bar{\mathbf{p}}_{k+1}:= (f_{k}^{N})^{2}\mathbf{p}_{k+1}^{N} - (f_{k}^{N-1})^{2}\mathbf{p}_{k+1}^{N-1}$.
  
    The boundary condition of the saddle-point value recursion~\eqref{eq:Val_scalar_alpha_0},~\eqref{eq:Val_scalar_alpha_k},~\eqref{eq:Val_scalar_alpha_N} at $k = L+1$ is given by:
    \begin{equation}\label{eq:b_cond_val_dual}
        \mathbf{p}_{L+1}^{\alpha_{L+1}} := \mathbf{g}_{L+1}^{\alpha_{L+1}}, \forall \alpha_{L+1} \in V.    
    \end{equation}
    \frqed
\end{theorem}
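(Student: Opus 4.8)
The plan is to read Theorem~\ref{th:NE_Val_scalar_dual_deter} as the node-by-node specialization of Lemma~\ref{lem:NE_Val_scalar} to the dual deter topology, closed off by a backward induction on $k$ from $k=L+1$. Lemma~\ref{lem:NE_Val_scalar} already gives~\eqref{eq:SPV_scalar_FlipG}, namely $\mathbf p_k^{\alpha_k*}=\mathbf g_k^{\alpha_k}$ plus the value of the zero-sum matrix $\hat{\Xi}_{k+1}^{\alpha_k}$, with the NE strategies $y_k^{\alpha_k*},z_k^{\alpha_k*}$ an optimal pair for that matrix. What is special about the dual deter model is that at every node each player has exactly two actions, so $\hat{\Xi}_{k+1}^{\alpha_k}$ is the explicit $2\times2$ array already displayed in~\eqref{eq:Cost_to_go_dual} (interior node), \eqref{eq:Cost_to_go_start} ($\alpha_k=0$), or~\eqref{eq:Cost_to_go_end} ($\alpha_k=N$). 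Hence the entire theorem reduces to writing down the value and the optimal strategies of a generic $2\times2$ zero-sum matrix $M=[m_{ij}]$ as functions of its entries, and then substituting the three node-dependent entry sets and simplifying.

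For the generic $2\times2$ step I would invoke the standard dichotomy for zero-sum matrix games: either one player has a weakly dominant row/column, in which case the opponent best-responds and the equilibrium sits at one of the four corners with value the corresponding $m_{ij}$; or neither does, in which case $M$ is completely mixed and has the unique interior saddle point $y^{*}\propto(m_{22}-m_{21},\,m_{11}-m_{12})$, $z^{*}\propto(m_{22}-m_{12},\,m_{11}-m_{21})$ with value $(m_{11}m_{22}-m_{12}m_{21})/D$, $D=m_{11}+m_{22}-m_{12}-m_{21}$. Assumption~\ref{ast:general_costs} is what prunes this dichotomy down to the finitely many regions in~\eqref{eq:def_TP_alpha_0}--\eqref{eq:Val_scalar_alpha_N}: non-negativity of the takeover costs fixes the outcome of every entry comparison except those involving $\hat{\mathbf p}_{k+1}$ at node $0$, $\tilde{\mathbf p}_{k+1}^{\alpha_k}$ and $\check{\mathbf p}_{k+1}^{\alpha_k}$ at an interior node, and $\bar{\mathbf p}_{k+1}$ at node $N$, so only these quantities (each compared against $\mathbf a_k^{\alpha_k}$ and $\mathbf d_k^{\alpha_k}$) delimit the cases. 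Substituting the entries of~\eqref{eq:Cost_to_go_start} resp.~\eqref{eq:Cost_to_go_end} into the completely-mixed formulas gives $D=-\hat{\mathbf p}_{k+1}$ resp.~$D=\bar{\mathbf p}_{k+1}$, and cancellation reproduces the mixed branches of~\eqref{eq:def_TP_alpha_0}, \eqref{eq:adv_TP_alpha_0}, \eqref{eq:def_TP_alpha_N}, \eqref{eq:adv_TP_alpha_N} and the value parameters~\eqref{eq:Val_scalar_alpha_0}, \eqref{eq:Val_scalar_alpha_N}; the interior node is the same computation with $D=\tilde{\mathbf p}_{k+1}^{\alpha_k}+\check{\mathbf p}_{k+1}^{\alpha_k}$, giving~\eqref{eq:def_TP_alpha_k}--\eqref{eq:Val_scalar_alpha_k}. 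The pure branches follow by reading off the surviving entry $m_{ij}$ in each corner.

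The induction wrapper is short. At $k=L+1$, the boundary condition $\hat{\Xi}_{L+2}^{\alpha_{L+1}}=\mathbf 0$ of Lemma~\ref{lem:NE_Val_scalar} together with~\eqref{eq:V_k_saddle_point} gives $V_{L+1}^{\alpha_{L+1}}(x)=g_{L+1}^{\alpha_{L+1}}(x)$, hence $\mathbf p_{L+1}^{\alpha_{L+1}}=\mathbf g_{L+1}^{\alpha_{L+1}}$, which is~\eqref{eq:b_cond_val_dual} and is non-negative under Assumption~\ref{ast:general_costs}. For the step, if the parametric form~\eqref{eq:para_form_al_QC} holds at $k+1$ with $\mathbf p_{k+1}^{\alpha}\ge0$ for all $\alpha$, then each $\hat{\Xi}_{k+1}^{\alpha_k}$ is well defined, the $2\times2$ analysis above delivers $\mathbf p_k^{\alpha_k}$ and the stated strategies, and one checks $\mathbf p_k^{\alpha_k}\ge0$ in every region (e.g.\ a subtracted term such as $\mathbf a_k^{0}\mathbf d_k^{0}/\hat{\mathbf p}_{k+1}$ stays below $\min(\mathbf a_k^{0},\mathbf d_k^{0})$ precisely because that region is entered via $\hat{\mathbf p}_{k+1}>\mathbf a_k^{0}$ and $\hat{\mathbf p}_{k+1}>\mathbf d_k^{0}$), so the parametric form propagates. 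Uniqueness of the NE strategies follows because a matrix game has a unique value and, inside each region, the strict inequalities make the relevant dominance strict (pure branches) or make $M$ completely mixed with a unique interior saddle point.

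I expect the only real friction to be bookkeeping rather than anything conceptual: (i) keeping the row/column $\leftrightarrow$ defender/adversary identification and the $\tau$-action transitions of~\eqref{eq:FD_dual_deter} consistent, so that the off-diagonal entries of~\eqref{eq:Cost_to_go_start} and~\eqref{eq:Cost_to_go_end} come out as claimed; and (ii) in the completely-mixed interior branch, simplifying $(m_{11}m_{22}-m_{12}m_{21})/(\tilde{\mathbf p}_{k+1}^{\alpha_k}+\check{\mathbf p}_{k+1}^{\alpha_k})$ into the three-fraction form of~\eqref{eq:Val_scalar_alpha_k} while tracking the sign of the denominator so that $y^{*},z^{*}$ stay in the probability simplex. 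A minor loose end is the measure-zero set where an inequality is tight ($M$ degenerate, only weak dominance): there the optimal strategies need not be a single point, so one should adopt a tie-breaking convention matching the displayed regions, or simply note that $\mathbf p_k^{\alpha_k}$ is unambiguous regardless.
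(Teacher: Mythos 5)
Your proposal is correct and is essentially the paper's own argument: each node reduces to the explicit $2\times 2$ cost-to-go matrix, the pure-NE regions come from entrywise comparisons (with Assumption~\ref{ast:general_costs} disposing of the trivial ones), the completely-mixed branch uses the standard $2\times 2$ saddle-point formulas, and the recursion closes with $\mathbf{p}_{L+1}^{\alpha_{L+1}}=\mathbf{g}_{L+1}^{\alpha_{L+1}}$; your explicit induction wrapper, non-negativity propagation, and tie-breaking remark are details the paper leaves implicit. One side effect of your generic substitution: applied to~\eqref{eq:Cost_to_go_dual} it yields $z^{*}\propto\bigl(\check{\mathbf{p}}_{k+1}^{\alpha_{k}}+\mathbf{d}_{k}^{\alpha_{k}},\;\tilde{\mathbf{p}}_{k+1}^{\alpha_{k}}-\mathbf{d}_{k}^{\alpha_{k}}\bigr)$ and a middle value term $(\check{\mathbf{p}}_{k+1}^{\alpha_{k}}\mathbf{d}_{k}^{\alpha_{k}}-\tilde{\mathbf{p}}_{k+1}^{\alpha_{k}}\mathbf{a}_{k}^{\alpha_{k}})/(\tilde{\mathbf{p}}_{k+1}^{\alpha_{k}}+\check{\mathbf{p}}_{k+1}^{\alpha_{k}})$, i.e.\ with $\tilde{\mathbf{p}}$ and $\check{\mathbf{p}}$ interchanged relative to~\eqref{eq:adv_TP_alpha_k} and the last branch of~\eqref{eq:Val_scalar_alpha_k}, so your route would surface what appears to be a typo carried in both the displayed statement and the paper's proof of Case~ii).
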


\begin{proof}
    \underline{Case i) $\alpha_{k} = 0$}
    
    We begin by determining the NE takeover in both pure and mixed strategies of the \texttt{FlipDyn} state $\alpha_{k} = 0$. In the \texttt{FlipDyn} state $\alpha_{k} = 0$, for a $2 \times 2$ cost-to-go matrix, there are two pure and a mixed NE. We start by identifying the pure strategy NE.
    
    a) \textit{Pure NE strategy - Both the defender and adversary choose to play idle}. We begin by determining the condition when the defender always plays the action of remaining in the same node or idle. Given the quadratic takeover costs~\eqref{eq:ST_MV_cost_Q}, we compare the row entries of $\Xi_{k+1}^{0}$. First, we compare the row entries corresponding to the first column, representing the adversary choosing to remain idle: 
    \begin{equation*}
        \begin{aligned}
            (f_{k}^{0})^{2}\mathbf{p}_{k+1}^{0} \leq (f_{k}^{0})^{2}\mathbf{p}_{k+1}^{0} + \mathbf{d}_{k}^{0}.
        \end{aligned}
    \end{equation*}
    Next, comparing the row entries corresponding to the adversary choosing to takeover, provides us:
    \begin{equation*}
        \begin{aligned}
            &(f_{k}^{1})^{2}\mathbf{p}_{k+1}^{1} - \mathbf{a}_{k}^{0} \leq (f_{k}^{0})^{2}\mathbf{p}_{k+1}^{0} + \mathbf{d}_{k}^{0} - \mathbf{a}_{k}^{0}, \\ 
            \Rightarrow  & \underbrace{(f_{k}^{1})^{2}\mathbf{p}_{k+1}^{1} -  (f_{k}^{0})^{2}\mathbf{p}_{k+1}^{0}}_{\hat{\mathbf{p}}_{k+1}} \leq \mathbf{d}_{k}^{0}.
        \end{aligned}
    \end{equation*}
    Now, we will determine the condition under which the adversary opts to play idle, by comparing the column entries of $\Xi_{k+1}^{0}$. Comparing the column entries corresponding to the defender takeover action, provides us:
    \begin{equation*}
        \begin{aligned}
            (f_{k}^{0})^{2}\mathbf{p}_{k+1}^{0} \geq (f_{k}^{1})^{2}\mathbf{p}_{k+1}^{1} - \mathbf{a}_{k}^{0}.
        \end{aligned}
    \end{equation*}
    Finally, comparing the column entries corresponding to a defender takeover action, leads to:
    \begin{equation*}
        \begin{aligned}
            &(f_{k}^{0})^{2}\mathbf{p}_{k+1}^{0} + \mathbf{d}_{k}^{0} \geq (f_{k}^{0})^{2}\mathbf{p}_{k+1}^{0} + \mathbf{d}_{k}^{0} - \mathbf{a}_{k}^{0}, \\
            \Rightarrow & 0 \geq - \mathbf{a}_{k}^{0}.
        \end{aligned}
    \end{equation*}
    For quadratic costs, the condition $ 0 \geq - \mathbf{a}_{k}^{0}$ always holds. The cost-to-go value corresponding to the pure strategy of playing idle by both players corresponds to the entry $\Xi_{k+1}^{0}(1,1),$ where $X(i,j)$ corresponds to the $i^{\text{th}}$ row and $j^{\text{th}}$ column of the matrix $X$. Therefore, the saddle-point value under such a pure NE strategy is:
    \begin{equation*}
        \mathbf{p}_{k}^{0} = \mathbf{g}_{k}^{0} + (f_{k}^{0})^{2}\mathbf{p}_{k+1}^{0}. 
    \end{equation*}

    b) \textit{Pure NE strategy - The defender chooses an idle action while the adversary chooses to takeover}. Similar to the pure strategy of staying idle stated earlier, we can compare the entries of the matrix $\Xi_{k+1}^{0}$ to determine the conditions under which we will obtain the desired pure NE strategy. We leave out the comparisons and conditions for brevity. 

    c) \textit{Mixed NE strategy - Mixed strategies are played by both players} if none of the conditions corresponding to the pure NE strategy conditions are satisfied. We obtain a mixed NE strategy if:
    \begin{equation*}
        \begin{aligned}
            & (f_{k}^{1})^{2}\mathbf{p}_{k+1}^{1} - (f_{k}^{0})^{2}\mathbf{p}_{k+1}^{0} >  \mathbf{a}_{k}^{0}, \\ 
            & (f_{k}^{1})^{2}\mathbf{p}_{k+1}^{1} - (f_{k}^{0})^{2}\mathbf{p}_{k+1}^{0} >  \mathbf{d}_{k}^{0}.
        \end{aligned}
    \end{equation*}
    A mixed strategy NE takeover for any $2 \times 2$ game (cf.~\cite{shapley1952basic}) is given by: 
    \[
    y^{0*}_{k}  = \begin{bmatrix} \dfrac{\mathbf{a}_k(x)}{\hat{\mathbf{p}}_{k+1}} & 1 - \dfrac{\mathbf{a}_k(x)}{\hat{\mathbf{p}}_{k+1}}
            \end{bmatrix}^{\tp}, z^{0*}_{k}  = \begin{bmatrix} 1 - \dfrac{\mathbf{d}_k(x)}{\hat{\mathbf{p}}_{k+1}} & \dfrac{\mathbf{d}_k(x)}{\hat{\mathbf{p}}_{k+1}}
            \end{bmatrix}^{\tp}.
    \]
    The mixed saddle-point value (cf.~\cite{shapley1952basic}) of the $2 \times 2$ zero-sum matrix $\Xi_{k+1}^{0}$ is given by: 
    \begin{equation*}
        y_k^{0*^{\tp}}\Xi_{k+1}^{0}z_{k}^{0*} := (f_{k}^{0})^{2}\mathbf{p}_{k+1}^{0} + \mathbf{d}_{k}^{0} - \dfrac{\mathbf{a}_{k}\mathbf{d}_{k}}{\hat{\mathbf{p}}_{k+1}}.
    \end{equation*}
    Collecting all the saddle-point values and NE strategies corresponding to the pure and mixed strategy NE, we completely characterize the game for the node $\alpha_{k} = 0$ over the horizon of $L$ by~\eqref{eq:def_TP_alpha_0}, ~\eqref{eq:adv_TP_alpha_0}, and~\eqref{eq:Val_scalar_alpha_0}.

    \underline{Case ii) $\alpha_{k} = \{1,2,\dots,N-1\}$}

    For the case corresponding to the node $\alpha_{k} = \{1,2,\dots,N-1\}$ there are up to four pure NE strategies and a mixed NE strategy. Similar to the case of $\alpha_{k} = 0$, we start by identifying the pure NE strategies. We will derive the conditions for just a single pure NE strategy, and leave out the rest for brevity. 

    a) \textit{Pure NE strategy - Both the players opt to play idle.} In here, we compare the entries of the cost-to-go matrix $\Xi_{k+1}^{\alpha_{k}}$ to determine the condition for the pure NE strategy. We compare the row entries corresponding to the action of choosing the node $\alpha_{k}$ by the adversary, to obtain:
    \begin{equation*}
        \begin{aligned}
            & (f_{k}^{\alpha_{k}})^{2}\mathbf{p}_{k+1}^{\alpha_{k}} < (f_{k}^{\underline{\alpha}})^{2}\mathbf{p}_{k+1}^{\underline{\alpha}} + \mathbf{d}_{k}^{\alpha_{k}}, \\
           \Leftrightarrow & \underbrace{(f_{k}^{\alpha_{k}})^{2}\mathbf{p}_{k+1}^{\alpha_{k}} - (f_{k}^{\underline{\alpha}})^{2}\mathbf{p}_{k+1}^{\underline{\alpha}}}_{\tilde{\mathbf{p}}_{k+1}^{\alpha_{k}}} \leq \mathbf{d}_{k}^{\alpha_{k}}
        \end{aligned}
    \end{equation*}
    Likewise, we compare the row entries corresponding to the adversary choosing the node $\overline{\alpha}$ to obtain:
    \begin{equation*}
        \begin{aligned}
            & (f_{k}^{\overline{\alpha}})^{2}\mathbf{p}_{k+1}^{\overline{\alpha}} - \mathbf{a}_{k}^{\alpha_{k}} \leq (f_{k}^{\alpha_{k}})^{2}\mathbf{p}_{k+1}^{\alpha_{k}} + \mathbf{d}_{k}^{\alpha_{k}} - \mathbf{a}_{k}^{\alpha_{k}}, \\
            & \underbrace{(f_{k}^{\overline{\alpha}})^{2}\mathbf{p}_{k+1}^{\overline{\alpha}}- (f_{k}^{\alpha_{k}})^{2}\mathbf{p}_{k+1}^{\alpha_{k}}}_{-\check{\mathbf{p}}_{k+1}^{\alpha_{k}}} \leq \mathbf{d}_{k}^{\alpha_{k}}.
        \end{aligned}
    \end{equation*}
    The cost-to-go value corresponding to the pure NE strategy of choosing the node $\alpha_{k}$ by both the players, correspond to the entry $\Xi_{k+1}^{\alpha_{k}}(1,1)$. The corresponding saddle-point value parameter is given by:
    \begin{equation*}
        \mathbf{p}_{k}^{\alpha_{k}} = \mathbf{g}_{k}^{\alpha_{k}} + (f_{k}^{\alpha_{k}})^{2}\mathbf{p}_{k+1}^{\alpha_{k}}.
    \end{equation*}

    b) \textit{Pure NE strategies} - The remaining three pure strategy NE are:
    \begin{itemize}
        \item The defender chooses the node $\underline{\alpha}$ while the adversary opts the node $\alpha_{k}$.
        \item The adversary chooses the node $\overline{\alpha}$ while the defender opts the node $\alpha_{k}$.
        \item The defender opts to play the node $\underline{\alpha}$, whereas, the adversary opts to play the node $\overline{\alpha}$.
    \end{itemize}
    We leave out the derivation corresponding to the conditions required to obtain any of the stated NE strategy and saddle-point value parameter for brevity.

    c) \textit{Mixed NE strategies} - As shown for the case i), a mixed strategy NE is played when none of the conditions corresponding to the pure strategy NE are met. A mixed strategy NE takeover corresponding to the $2 \times 2$ cost-to-go matrix $\Xi_{k+1}^{\alpha_{k}}$ is given by:
    \begin{equation*}
            y_{k}^{\alpha_{k}*} = \begin{bmatrix} \dfrac{\tilde{\mathbf{p}}_{k+1}^{\alpha_{k}}-\mathbf{a}_{k}^{\alpha_{k}}}{\tilde{\mathbf{p}}_{k+1}^{\alpha_{k}} + \check{\mathbf{p}}_{k+1}^{\alpha_{k}}} \\[12pt] \dfrac{\check{\mathbf{p}}_{k+1}^{\alpha_{k}} + \mathbf{a}_{k}^{\alpha_{k}}}{\tilde{\mathbf{p}}_{k+1}^{\alpha_{k}} + \check{\mathbf{p}}_{k+1}^{\alpha_{k}}} \\ 
            \end{bmatrix}, 
            z_{k}^{\alpha_{k}*} = \begin{bmatrix} \dfrac{\tilde{\mathbf{p}}_{k+1}^{\alpha_{k}} + \mathbf{d}_{k+1}^{\alpha_{k}}}{\tilde{\mathbf{p}}_{k+1}^{\alpha_{k}} + \check{\mathbf{p}}_{k+1}^{\alpha_{k}}} \\[12pt] \dfrac{\check{\mathbf{p}}_{k+1}^{\alpha_{k}} - \mathbf{d}_{k+1}^{\alpha_{k}}}{\tilde{\mathbf{p}}_{k+1}^{\alpha_{k}} + \check{\mathbf{p}}_{k+1}^{\alpha_{k}}}\\ 
            \end{bmatrix}.
    \end{equation*}
    The mixed saddle-point value of the cost-to-go matrix $\Xi_{k+1}^{\alpha_{k}}$ is given by:
    \begin{equation*}
        \begin{aligned}
            y_k^{\alpha_{k}*^{\tp}} & \Xi_{k+1}^{\alpha_{k}}z_{k}^{\alpha_{k}*} := \dfrac{(f_{k}^{\alpha_{k}})^{4}(\mathbf{p}_{k+1}^{\alpha_{k}})^2 + \mathbf{a}_{k}^{\alpha_{k}}\mathbf{d}_{k}^{\alpha_{k}}}{\tilde{\mathbf{p}}_{k+1}^{\alpha_{k}} + \check{\mathbf{p}}_{k+1}^{\alpha_{k}}} \\[5pt] 
                    & + \dfrac{\tilde{\mathbf{p}}_{k+1}^{\alpha_{k}}\mathbf{d}_{k}^{\alpha_{k}} - \check{\mathbf{p}}_{k+1}^{\alpha_{k}}\mathbf{a}_{k}^{\alpha_{k}}}{\tilde{\mathbf{p}}_{k+1}^{\alpha_{k}} + \check{\mathbf{p}}_{k+1}^{\alpha_{k}}}  - \dfrac{(f_{k}^{\underline{\alpha}})^{2}\mathbf{p}_{k+1}^{\underline{\alpha}}(f_{k}^{\overline{\alpha}})^{2}\mathbf{p}_{k+1}^{\overline{\alpha}}}{\tilde{\mathbf{p}}_{k+1}^{\alpha_{k}} + \check{\mathbf{p}}_{k+1}^{\alpha_{k}}}.
        \end{aligned}
    \end{equation*}
    All the pure and mixed strategy NE and the corresponding saddle-point value parameters completely characterize the game for the nodes $\alpha_{k} = \{1,2,\dots,N-1\}$. The last condition in~\eqref{eq:def_TP_alpha_k},~\eqref{eq:adv_TP_alpha_k} and~\eqref{eq:Val_scalar_alpha_k} correspond to the mixed NE strategy and saddle-point value parameter. 

    \underline{Case iii) $\alpha_{k} = N$}

    We leave out the derivation of the pure and mixed NE strategies and corresponding saddle-point value parameter for $\alpha_{k} = N$, as they are analogous to the case $\alpha_{k} = 0$. 

    The boundary condition~\eqref{eq:b_cond_val_dual} corresponds to the state cost at time $k = L+1$, which is trivially  satisfied under the condition:
    \begin{equation*}
        \mathbf{p}_{L+1}^{\alpha_{L+1}} := \mathbf{g}_{L+1}^{\alpha_{L+1}}, \forall \alpha_{L+1} \in V.
    \end{equation*}
    \frQED
\end{proof}

Theorem~\ref{th:NE_Val_scalar_dual_deter} presents a closed-form solution for the \texttt{FlipDyn-G}~\eqref{eq:obj_def_E} game with NE takeover strategies independent of state for scalar linear dynamical systems. The dual deter model captures a specific structure of the general \texttt{FlipDyn-G} game. This structure enables us to complete the NE strategies and saddle-point value of the game in closed-form. The following remark indicates when the dual deter model maps to the \texttt{FlipDyn} model~\cite{FlipDyn_banik2022}.

\begin{remark}
    When the dual deter model consists of only two nodes, $\alpha = \{0,1\}$, the \texttt{FlipDyn-G} game reduces to a \texttt{FlipDyn}~\cite{FlipDyn_banik2022} model with a full state feedback control, with NE strategy and saddle-point value parameter as described in~\eqref{eq:def_TP_alpha_0},~\eqref{eq:adv_TP_alpha_0},~\eqref{eq:Val_scalar_alpha_0},~\eqref{eq:def_TP_alpha_N},~\eqref{eq:adv_TP_alpha_N}, and~\eqref{eq:Val_scalar_alpha_N}.
\end{remark}

Next, we illustrate the results of Lemma~\ref{lem:NE_Val_scalar} through two numerical examples.

\subsubsection*{\underline{Numerical Example I}} We evaluate the NE takeover strategy and saddle-point value of the \texttt{FlipDyn-G} game on an epidemic dynamic model, which is a discrete-time linear model capturing the dynamics of infection. Such a model can be mapped to a graph environment with four nodes, namely; susceptible, infected, recovered and deceased, termed as SIRD model. We assume the adversary as the underlying source of infection causing the transition between nodes. A government organization is represented as the defender preventing transitions to nodes, which can lead to significant losses. A typical epidemic model consists of fixed transition probabilities between different nodes. However, in this setup, the transition will be governed based on the NE takeover policies. A graphical representation of the SIRD model is shown in Figure~\ref{fig:Epidemic_model_graph}, with four \texttt{FlipDyn} states, susceptible (S), infected (I), recovered (R) and deceased (D). Therefore, the \texttt{FlipDyn} state can take on the value, $\alpha_{k} \in \{\text{S},\text{I},\text{R},\text{D}\}, \forall k \in \mathcal{K}.$

\begin{figure*}[ht]
	\begin{center}
		\subfloat[]{\includegraphics[width = 0.45\linewidth]{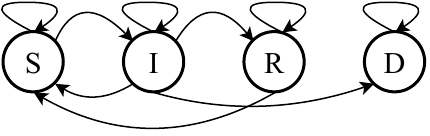}
			\label{fig:Epidemic_model_graph}	
		}
		\subfloat[]{\includegraphics[width = 0.45\linewidth]{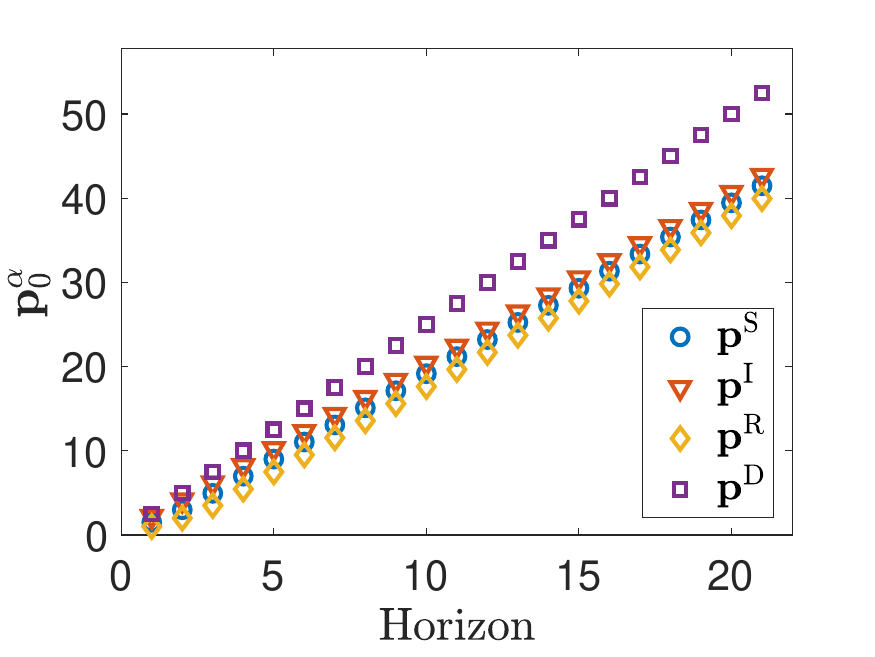}
			\label{fig:Val_EFDC}	
		}
    \caption{\small (a) An epidemic model represented as a graph with four nodes. The \texttt{FlipDyn} states of the graph are susceptible (S), Infected (I), Recovered (R), and Deceased (D). (b) Saddle-point parameters for each node $\alpha = \{\text{S,I,R,D}\}$, over time $k$, with horizon length $L = 20$.}	
		\label{fig:FlipDyn_TP_EPDG_a}
	\end{center}
\end{figure*}
This example presents only a \texttt{FlipDyn} dynamics, as the nodes do not have an underlying continuous state dynamics. In this example, we will consider the costs to be time-invariant, i.e., $\mathbf{g}^{\alpha}_{k} = \mathbf{g}^{\alpha}, \mathbf{d}^{\alpha}_{k} = \mathbf{d}^{\alpha}, $ and $\mathbf{a}_{k}^{\alpha} = \mathbf{a}^{\alpha}, \forall k \in \mathcal{K}$ and $\alpha \in \{\text{S,I,R,D}\}$. The state costs follow the order given by:
\begin{equation}
    \label{eq:state_costs_order_ED}
    \mathbf{g}^{\text{D}} > \mathbf{g}^{\text{I}} > \mathbf{g}^{\text{S}} > \mathbf{g}^{\text{R}}.
\end{equation}
The state costs~\eqref{eq:state_costs_order_ED} imply that the \texttt{FlipDyn} state of death ($\alpha = \text{D}$) has the highest cost, while the least is for the recovered ($\alpha = \text{R}$). Similarly, the defender and adversary takeover costs follow the order given by:
\begin{align}
    \label{eq:def_costs_order_ED}
    \mathbf{d}^{\text{R}} > \mathbf{d}^{\text{S}} > \mathbf{d}^{\text{I}} > \mathbf{d}^{\text{D}}, \\
    \mathbf{a}^{\text{D}} > \mathbf{a}^{\text{I}} > \mathbf{a}^{\text{S}} > \mathbf{a}^{\text{R}}.
\end{align}
The costs used in this numerical example are:
\begin{equation*}
    \begin{aligned}
        &\mathbf{g}^{\text{S}} = 1.5, \ \mathbf{g}^{\text{I}} = 2.2, \
        \mathbf{g}^{\text{R}} = 1.0, \
        \mathbf{g}^{\text{D}} = 2.5, 
        \\
        &\mathbf{d}^{\text{S}} = 0.7, \ \mathbf{d}^{\text{I}} = 0.5, \
        \mathbf{d}^{\text{R}} = 0.8, \
        \mathbf{d}^{\text{D}} = 0.2, 
        \\
        &\mathbf{a}^{\text{S}} = 0.5, \ \mathbf{a}^{\text{I}} = 0.7, \
        \mathbf{a}^{\text{R}} = 0.1, \
        \mathbf{a}^{\text{D}} = 0.9. \
    \end{aligned}    
\end{equation*}
We solve for the NE takeover strategies and saddle-point value using Lemma~\ref{lem:NE_Val_scalar}. Figure~\ref{fig:Val_EFDC} shows the saddle-point value parameters $\textbf{p}_{k}^{\alpha}, \alpha = \{\text{S,I,R,D}\}$ for a horizon length of $L = 20$. The saddle-point values corresponding to each of the nodes follow the order described in~\eqref{eq:state_costs_order_ED} indicating the cost in transitioning to the state $\alpha = D$ is the highest. We also observe that the value of the node $\alpha = I$ remains close to the other node states $\alpha = \{\text{R,D}\}$ reflective of the defender policy to prevent transition to $\alpha = \text{D}$. 

The defender and adversary policy for the state $\alpha = \text{I}$ are shown in Figures~\ref{fig:Def_pol_EFDG} and~\ref{fig:Adv_pol_EFDG}. Notice that the state $\alpha = \text{D}$ is a sink state, in other words, once you transition to it, you cannot transition to other states. We illustrate the policy on the state $\alpha = \text{I}$, as this state enables both the players to transition to any of the states. The defender's policy corresponds to transitioning to only the susceptible and recovered state, and no transition to the death state or remaining in the infected state. On the contrary, the adversary demonstrates high probability of transitioning to the death state and a low probability to the recovered state, while keeping the probability of transition to zero for the susceptible and infected state. 


\begin{figure*}[ht]
	\begin{center}
		\subfloat[]{\includegraphics[width = 0.45\linewidth]{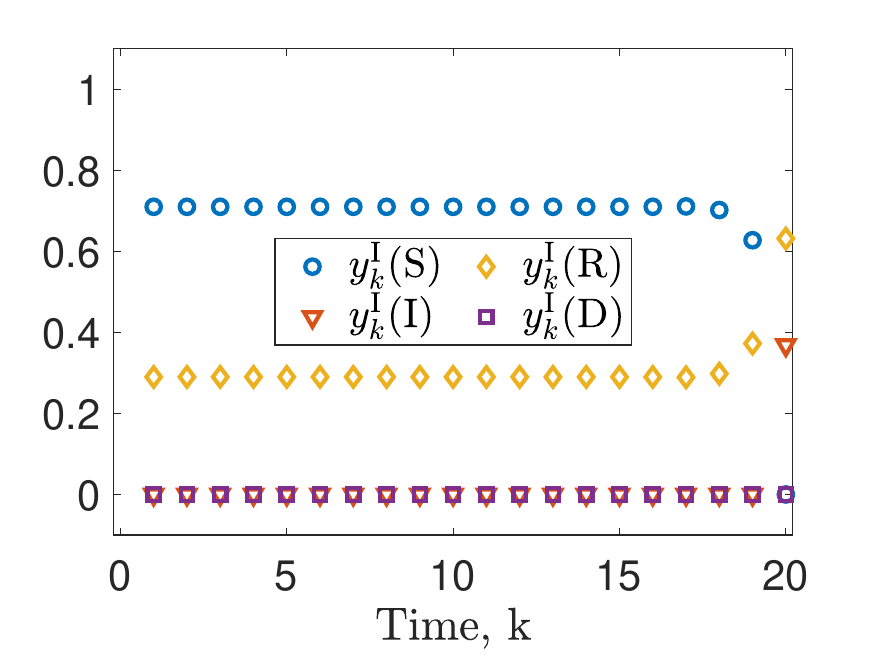}
			\label{fig:Def_pol_EFDG}	
		}
		\subfloat[]{\includegraphics[width = 0.45\linewidth]{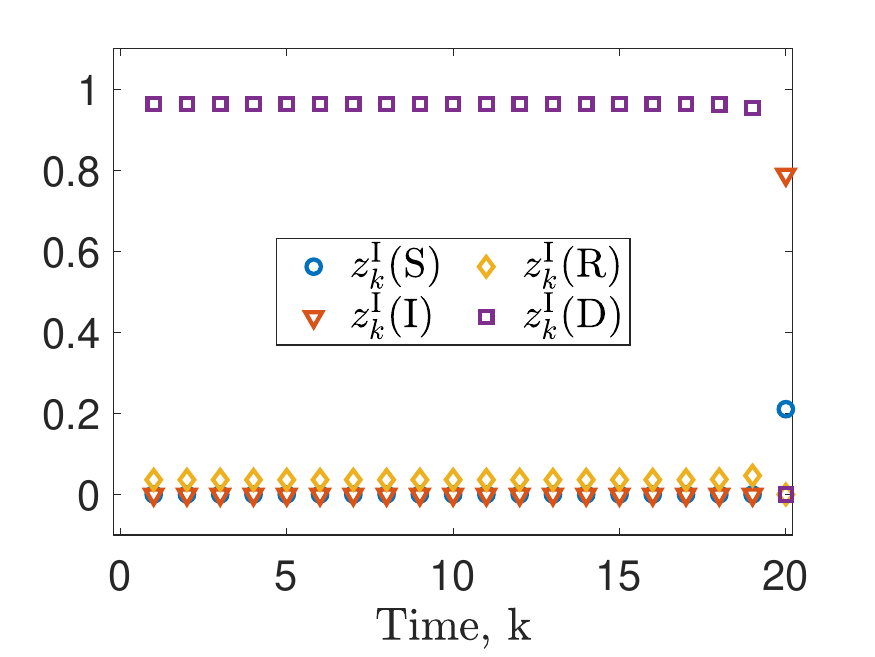}
			\label{fig:Adv_pol_EFDG}	
		}
    \caption{\small For the node $\alpha = \text{I}$, the NE policy of the (a) defender and (b) adversary, where $y_{k}^{\text{I}}(\alpha), z_{k}^{\text{I}}(\alpha), \alpha = \{\text{S,I,R,D}\}$ corresponds to the probability of selecting the takeover node $\alpha$, given $\alpha_k = \text{I}$.}	
		\label{fig:FlipDyn_TP_EPDG_b}
	\end{center}
\end{figure*}

\subsubsection*{\underline{Numerical Example II}} We evaluate the NE takeover strategy and saddle-point value of the \texttt{FlipDyn-G} game on a stock market Markov chain~\cite{peovski2022monitoring}, with node dynamics. Such a model consists of three nodes, namely; bull market, bear market, and stagnant market. An investor is represented as an adversary attempting to capitalize the market. A bull, bear, and stagnant market represents an increase, decrease and steady market growth, respectively. A graphical representation of such a stock market model is shown in Figure~\ref{fig:stock_market_model}, with three \texttt{FlipDyn} states, bull (Bu), Bear (Br), and stagnant (St). Therefore, the \texttt{FlipDyn} state can take on the value, $\alpha_{k} \in \{\text{Bu},\text{Br},\text{St}\}, \forall k \in \mathcal{K}.$
\begin{figure}[ht]
    \centering
    \includegraphics[width = 0.5\linewidth]{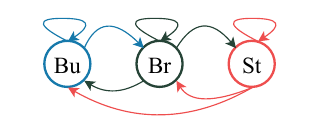}
    \caption{A stock market Markov chain model represented as a graph with three nodes. The \texttt{FlipDyn} states of the graph are Bull (Bu), Bear (Br), and Stagnant (St).}
    \label{fig:stock_market_model}
\end{figure}


\begin{figure*}[ht]
	\begin{center}
		\subfloat[]{\includegraphics[width = 0.45\linewidth]{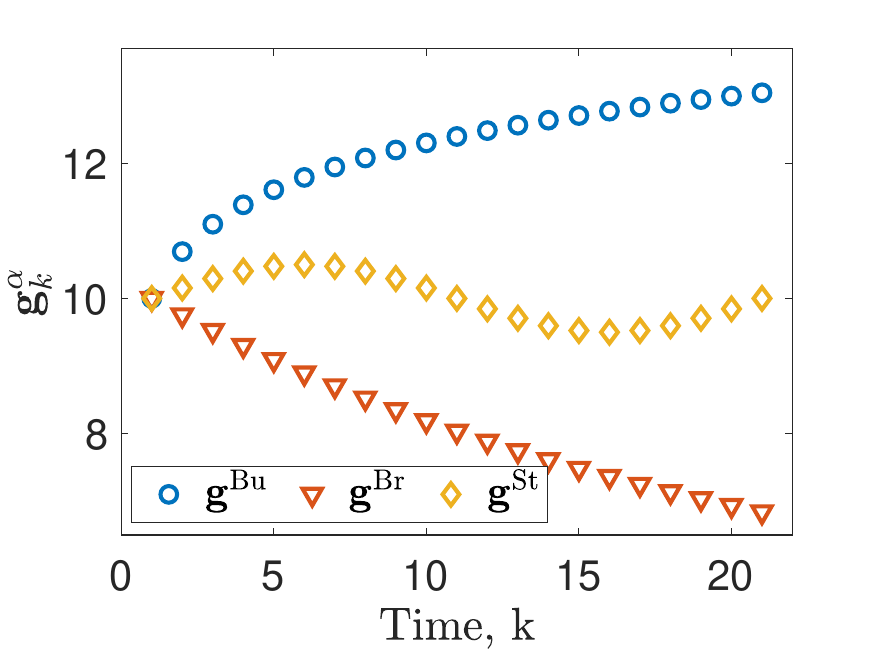}
			\label{fig:G_SMM}	
		}
		\subfloat[]{\includegraphics[width = 0.45\linewidth]{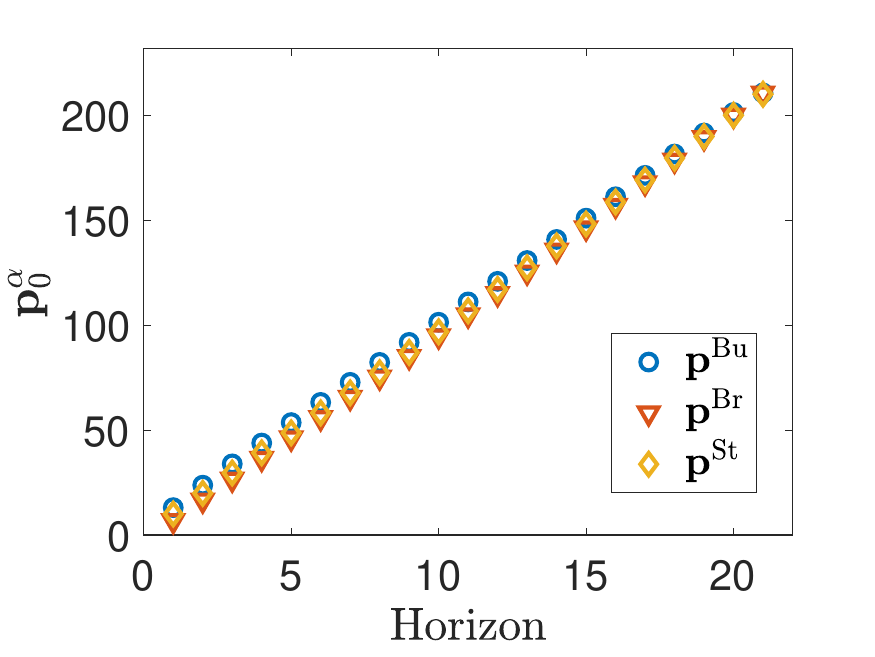}
			\label{fig:Val_SMM}	
		}
    \caption{\small (a) The state costs $\mathbf{g}_{k}^{\alpha}, \alpha \in \{\text{Bu, Br, St}\}$. (b) Saddle-point parameters for each node $\alpha = \{\text{Bu, Br, St}\}$, over time $k$, with horizon length $L = 20$.}	
		\label{fig:FlipDyn_TP_SMM_a}
	\end{center}
\end{figure*}

\begin{figure*}[ht]
	\begin{center}
		\subfloat[]{\includegraphics[width = 0.45\linewidth]{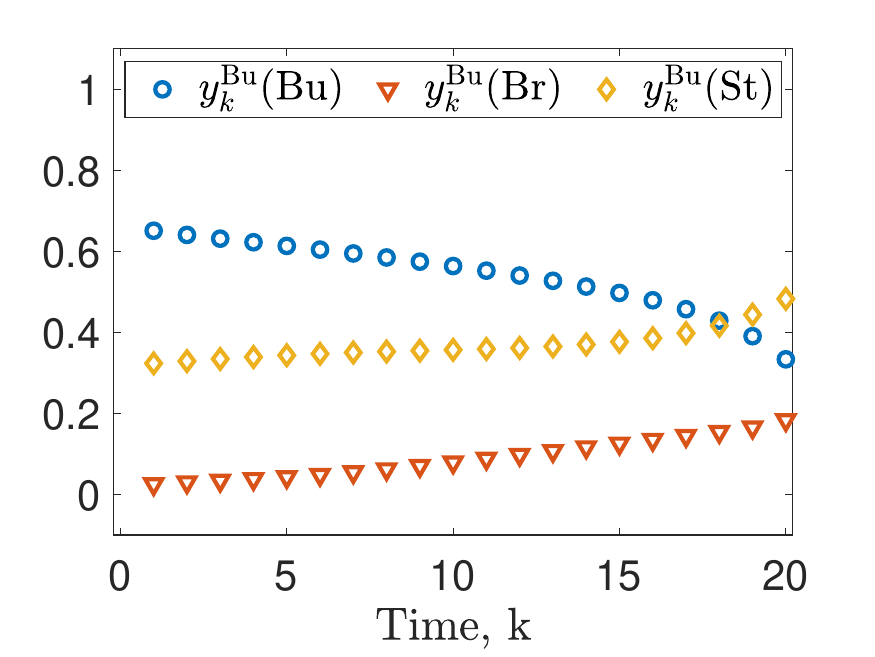}
			\label{fig:Def_pol_SMM_Bu}	
		}
            \subfloat[]{\includegraphics[width = 0.45\linewidth]{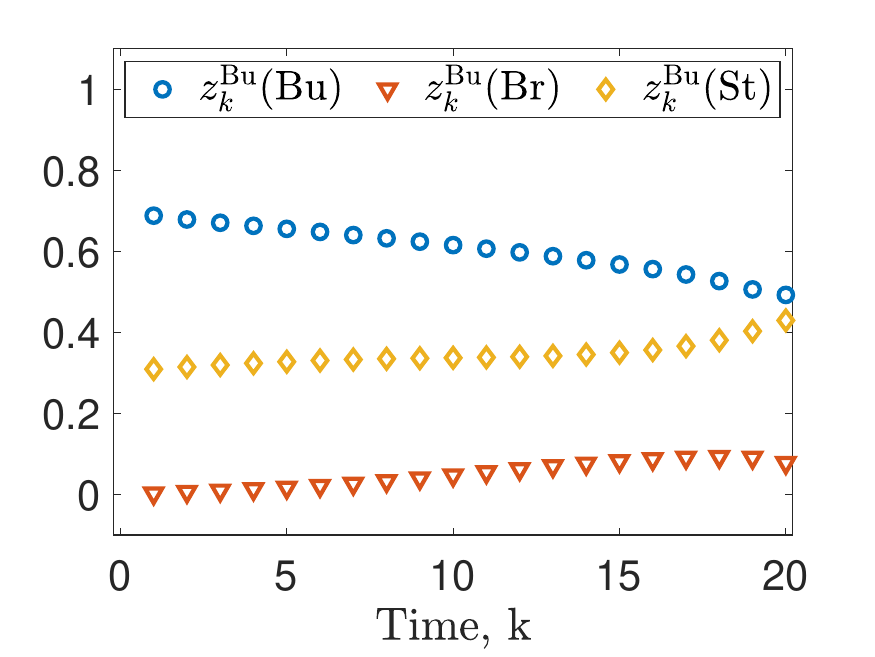}
			\label{fig:Adv_pol_SMM_Bu}	
		}
    \caption{\small  For the node $\alpha = \text{Br}$, the NE policy of the (c) defender and (d) adversary, where $y_{k}^{\text{Bu}}(\alpha), z_{k}^{\text{Bu}}(\alpha), \alpha = \{\text{Bu, Br, St}\}$ corresponds to the probability of selecting the takeover node $\alpha$, given $\alpha_k = \text{Bu}$.}	
		\label{fig:FlipDyn_TP_SMM_b}
	\end{center}
\end{figure*}

For this example, we will assume the costs and dynamics are time-invariant, i.e., $\mathbf{g}^{\alpha}_{k} = \mathbf{g}^{\alpha}, \mathbf{d}^{\alpha}_{k} = \mathbf{d}^{\alpha}, $ and $\mathbf{a}_{k}^{\alpha} = \mathbf{a}^{\alpha}, f_{k}^{\alpha_{k}} = f^{\alpha_{k}}, \forall k \in \mathcal{K}$ and $\alpha \in \{\text{Bu,Br,St}\}$. 
The state costs and node dynamics follow the order given by:
\begin{align}
    \label{eq:state_SMM}
    &\mathbf{g}^{\text{Bu}} > \mathbf{g}^{\text{Br}} > \mathbf{g}^{\text{St}}, \\ &
    f^{\text{Bu}} > f^{\text{Br}} > f^{\text{St}}. \label{eq:F_order_SMM}
\end{align}
The state costs~\eqref{eq:state_SMM} and dynamics~\eqref{eq:F_order_SMM} indicate the \texttt{FlipDyn} state of the bull market ($\alpha = \text{Bu}$) has the highest value with the least being the stagnant market ($\alpha = \text{St}$). Similarly, the defender and adversary takeover costs follow the order given by:
\begin{align}
    \label{eq:def_costs_order_SMM}
    \mathbf{d}^{\text{Bu}} > \mathbf{d}^{\text{St}} > \mathbf{d}^{\text{Br}}, \\
    \mathbf{a}^{\text{Br}} > \mathbf{a}^{\text{St}} > \mathbf{a}^{\text{Bu}}.
\end{align}
The dynamics and takeover costs used in this numerical example are:
\begin{equation*}
    \begin{aligned}
        &\mathbf{f}^{\text{Bu}} = 1.1, \ \mathbf{f}^{\text{Br}} = 0.95, \
        \mathbf{g}^{\text{St}} = 1.0, 
        \\
        &\mathbf{d}^{\text{Bu}} = 0.90, \ \mathbf{d}^{\text{Br}} = 0.50, \
        \mathbf{d}^{\text{St}} = 0.75, 
        \\
        &\mathbf{a}^{\text{Bu}} = 0.50, \ \mathbf{a}^{\text{Br}} = 0.90, \
        \mathbf{a}^{\text{St}} = 0.0.75,
    \end{aligned}    
\end{equation*}
The \texttt{FlipDyn} state costs are time-varying and indicate in Figure~\ref{fig:G_SMM}.
We solve for the NE takeover strategies and saddle-point value using Lemma~\ref{lem:NE_Val_scalar}. Figure~\ref{fig:Val_SMM} shows the saddle-point value parameters $\textbf{p}_{k}^{\alpha}, \alpha = \{\text{Bu,Br,St}\}$ for a horizon length of $L = 20$. At the start of the horizon, the difference between the saddle-point values follows the order~\eqref{eq:state_SMM}. However, as the horizon increases, the difference between saddle-point values between the \texttt{FlipDyn} states are indistinguishable, indicative of an efficient market (neither party gains). 

We only illustrate the defender and adversary policy for the state $\alpha = \text{Bu}$ shown in Figures~\ref{fig:Def_pol_SMM_Bu} and~\ref{fig:Adv_pol_SMM_Bu}, respectively. The policy trends of both players are quite similar, with a high probability of being in the bull market, followed by the stagnant market and bear market. The investor (adversary) indicates a higher probability of being in the bull market and maintains this probability throughout the time horizon. In contrast, the defender exhibits a relatively lower probability of being in the bull state, with the highest probability gradually shifting to transitioning to the stagnant state over time.

This numerical examples illustrates the use of the \texttt{FlipDyn} model in graphs in determining the node takeover strategies for each player. Additionally, it provides insight into the system's behavior for the given costs and the system's stability properties. These insights are useful while designing the costs, which further impact the takeover policies.

\section{Conclusion}\label{sec:Conclusion}
In this paper, we have introduced the \texttt{FlipDyn-G} framework, extending the \texttt{FlipDyn} model to a graph-based setting where each node represents a dynamical system. Our model captures the strategic interactions between a defender and an adversary who aim to control node state in a graph to minimize and maximize a finite horizon sum cost, respectively. The proposed framework results in a hybrid dynamical system in conjunction with continuous state node dynamics.

Our contributions include modeling and characterizing the \texttt{FlipDyn-G} game for general dynamical systems and deriving the corresponding Nash Equilibrium (NE) takeover strategies. Additionally, for scalar linear discrete-time dynamical systems with quadratic costs, we derived NE takeover strategies and saddle-point values that are independent of the continuous state of the system. For a finite state birth-death Markov chain, we derived analytical expressions for these NE strategies and values.

Through numerical studies involving epidemic models and linear dynamical systems with adversarial interactions, we have illustrated the applicability and effectiveness of our proposed methods. The results demonstrate that our approach can robustly determine optimal strategies for both players, enhancing the resilience and security of cyber-physical systems (CPS) against stealthy takeovers.

Future work will focus on extending this framework to more complex topologies and multi-agent systems, as well as exploring real-time applications in various CPS domains such as smart grids, autonomous vehicles, and industrial automation.

\begin{credits}
\subsubsection{\ackname} 
This research was supported in part by the NSF Award CNS-2134076 under the Secure and Trustworthy Cyberspace (SaTC) program and in part by the NSF CAREER Award ECCS-2236537.

\subsubsection{\discintname}
The authors have no competing interests to declare that are
relevant to the content of this article.

\end{credits}
%
%
%
\bibliographystyle{splncs04}
%
\bibliography{references}





\end{document}